\newif\ifsage\sagefalse
\newtheorem{theorem}{Theorem}[section]
\newcommand{\field}{\mathbb}
\newcommand{\reals}{\field{R}}
\newcommand{\ie}{\textit{i.e.}~}
\newcommand{\eg}{\textit{e.g.}~}
\newcommand{\mbs}[1]{\bm{#1}}
\newcommand{\dexp}{\mathrm{dexp}}
\renewcommand{\skew}{\mathrm{skew}}
\newcommand{\nodrill}{\mbs{\chi}}
\newcommand{\pmap}[1]{{#1}_\perp}
\newcommand{\dmap}[1]{{#1}_\parallel}
\newcommand{\rotation}{{\mbs{{\Lambda}}}}
\newcommand{\dd}[2]{\frac{\mathrm{d}#1}{\mathrm{d}#2}}
\newcommand{\pd}[2]{\frac{\partial #1}{\partial #2}}
\newcommand{\ui}{\bm{E}_1}
\newcommand{\uj}{\bm{E}_2}
\newcommand{\uk}{\bm{E}_3}
\newcommand{\Order}[2]{\mathcal{O}(#1^{#2})}
\let\oldOmega=\Omega \renewcommand{\Omega}{\mathit{\oldOmega}}
\let\oldGamma=\Gamma \renewcommand{\Gamma}{\mathit{\oldGamma}}
\let\oldLambda=\Lambda \renewcommand{\Lambda}{\mathit{\oldLambda}}
\let\oldSigma=\Sigma \renewcommand{\Sigma}{\mathit{\oldSigma}}
\let\oldPi=\Pi \renewcommand{\Pi}{\mathit{\oldPi}}
\let\oldXi=\Xi \renewcommand{\Xi}{\mathit{\oldXi}}
\newcommand{\axial}{\bm{n}_\parallel}
\newcommand{\saxial}{n_\parallel}
\newcommand{\shear}{\bm{n}_\perp}
\newcommand{\torsion}{\bm{m}_\parallel}
\newcommand{\storsion}{m_\parallel}
\newcommand{\bartorsion}{\bar{m}_\parallel}
\newcommand{\tildetorsion}{\tilde{m}_\parallel}
\newcommand{\bending}{\bm{m}_\perp}
\newcommand{\Bending}{\bm{M}_\perp}
\begin{document}

\ifsage
\input{0_front}
\else

\title{Variational principles for nonlinear Kirchhoff rods}

\author[1,2]{Ignacio Romero}
\author[3]{Cristian Guillermo Gebhardt}
\affil[1]{IMDEA Materials Institute, C/ Eric Kandel 2, Tecnogetafe, Madrid 28906, Spain}
\affil[2]{Universidad Polit\'ecnica de Madrid, Jos\'e Guti\'errez Abascal, 2, Madrid 29006, Spain}
\affil[3]{Institute of Structural Analysis, Leibniz Universit\"at Hannover, Appelstra\ss e 9 A, 30167 Hannover, Germany}

\date{\today}

\maketitle

\begin{abstract}
  The present article studies variational principles for
  the formulation of static and dynamic problems involving
  Kirchhoff rods in a fully nonlinear setting. These results, some of them new, others
  scattered in the literature, are presented in a systematic
  way, helping to clarify certain aspects that have remained
  obscure. In particular,
  the study of transversely isotropic models reveals the
  delicate role that differential geometry plays in their
  formulation and unveils consequently some approximations that
  can be made to obtain simplified formulations.
\end{abstract}
\fi
\section{Introduction}
\label{sec-intro}
Since the seminal work of Kirchhoff in 1859 on rods subjected to bending and twisting, there have
been innumerable works devoted to the study of such models, starting from the classical
contributions by Clebsch \cite{Clebsch:1862wo}, the Cosserat brothers \cite{Cosserat:1909uf}, and
Love \cite{Love:1927wt}.  The topic, already fairly mature, is certainly rich and has motivated
detailed analyses during the last fifty years \cite{Antman:1974tc, Dill:1992wn, Coleman:1993hm, Langer:1996wj,
Antman:1995wm, Weiss:2002tq, OReilly:2017gs}.  Other recent contributions continue to
research more specific aspects of the theory like its stability \cite{Mielke:1988hk,
Goriely:2001eg}, the Hamiltonian structure \cite{Dichmann:1996fr, Singer:2008fw}, discrete models
\cite{Bergou:2008km}, relations with differential geometry \cite{Cheng:2017dk}, etc.

Despite being such a classical topic, there is a reborn interest in using and analyzing nonlinear
Kirchhoff rods, since they have found applications in fields as disparate  as DNA modeling
\cite{Benham:1979fm, Shi:1994ei, Schlick:1995km}, hair simulation \cite{Bertails:2006gd,
Kmoch:2009hm}, cables \cite{Coyne:1990hu, Boyer:2011hv}, M\"obius bands \cite{Moore:2018kn},
oil-well drill-strings \cite{Tucker:1999jp}, computer graphics \cite{Pai:2002bu}, space tethers
\cite{Valverde:2006cj}, vortex-jet filaments \cite{Fukumoto:2007}, knots \cite{Ivey:1999id,
Audoly:2007ko}, climbing plants \cite{McMillen:2002iz}, catheters for medical interventions
\cite{Wang:2017gk}, etc.  In many of these situations, numerical methods provide the only means for
arriving to useful solutions, motivating new developments in the computational side
\cite{Weiss:2002hh, Boyer:2004, Greco:2013co, Greco:2014hx, Meier:2016vz}.

In this work, we focus on variational formulations of Kirchhoff rods, and discuss in detail aspects
related to their restriction to the transversely isotropic case.  Our interest in the variational
approach is mainly due to the fact that it represents the basis of Galerkin-type numerical methods, most
notably finite elements. Motivated by the development of more efficient numerical methods, novel
Kirchhoff formulations have been proposed in recent years  \cite{Boyer:2011hv, Greco:2013co, Greco:2014hx,
Meier:2016vz} that demand a careful analysis, and studying their variational structure seems a
useful starting point. The latter provides a solid setting for the geometry of the configuration
space and a rigorous path for obtaining the governing equations of the problem, as well as the
consistent boundary and initial conditions. 

Variational principles for Kirchhoff rods are, of course, not new and many references
can be found where the governing equations of the rod are found from the stationarity conditions of
some potential energy \cite{Antman:1995wm, Langer:1996wj}, in the case of a quasistatic problem, or
Hamilton's action, in a dynamic context \cite{Dichmann:1996fr,Meier:2017gs}. The easiest way to formulate them
is to use variational principles for geometrically exact rods with shear deformation
\cite{Simo:1988uc,Simo:1985ur}, and then constrain the cross section orientation
to be orthogonal with respect to the tangent vector of the curve that describes the rod in the ambient space, following Kirchhoff's hypothesis. This results
in constrained optimization principles that are far from optimal from the computational point of view.

We review under which conditions the equations of a Kirchhoff rod can be obtained from a variational
principle without constraints. The key for such type of formulations is to eliminate the rotation
group from the configuration space of the rod \cite{Langer:1996wj, Boyer:2011hv, Greco:2013co, Meier:2014dk, Lefevre:2017ig}
and replace it with a simpler set that already accounts for the
Kirchhoff constraint.  It will be shown that it is indeed possible to formulate such models, and
provide the variational principle behind them, but only when the rod is transversely isotropic. This
is a very common situation both for analysis \cite{Langer:1996wj, Boyer:2011hv, Greco:2013co} as well
as applications \cite{Benham:1979fm, Audoly:2007ko, Boyer:2011hv, Lefevre:2017ig}, so its study is of
practical value. There are delicate assumptions that are sometimes implicitly made in this
transformation, assumptions that are not longer valid for dynamic problems. In the latter situation,
we show what other simplifications need to be done if the full rotation group is to be left out of
the formulation.

In view of the previous arguments, two types of variational principles are presented first: one
constrained principle for (static and dynamic) general Kirchhoff rods, and a second family of
principles for rods with isotropic cross sections. Finally, we will formulate a variational principle
that is valid for general rods but, when applied to transversely isotropic ones, decouples in a
certain way, making it very appealing for general purpose numerical applications.

The remainder of the article is structured as follows. Section~\ref{sec-unit-sphere} presents a
collection of results coming from differential geometry that will play a key role in the definition
of configuration spaces.  In Section~\ref{sec-formulation} we describe general concepts of nonlinear
rods, including deformation measures, stress resultants, and momenta. These concepts will be used
throughout the rest of the article. Section~\ref{sec-canonical} presents a constrained variational
principle for the static and dynamic analysis of extensible and inextensible Kirchhoff rods with
general cross sections. Section~\ref{sec-composite} departs from the main argument of the article to
introduce and discuss parametrizations of the rotational group as composite rotations. This is the
key geometrical argument behind rod models that eliminate the rotation group from their
configuration space, and hence we review it carefully. With these results at hand, we discuss in
Section~\ref{sec-ti} variational principles for transversely isotropic Kirchhoff rods, carefully
distinguishing between the quasistatic and dynamic cases, since there are different sets of
hypotheses in each case.  A variational principle suitable for general and transversely isotropic
rods is described in Section~\ref{sec-mixed}. Finally, the main results of the article are
summarized in Section~\ref{sec-summary}.

\section{Geometrical concepts}
\label{sec-unit-sphere}
Differential geometry plays a critical role in the description of nonlinear structural models, and
rods in particular. We review in this section the geometrical concepts employed in subsequent
sections.

\subsection{The unit sphere}
\label{subs-stwo}

The unit sphere $S^2$ plays a key role in the formulation of the rod models
discussed in this article. For completeness, and to clarify the notation, 
we summarize some of its main features, leaving to
other references more detailed descriptions~\cite{Eisenberg1979,RoUrreCy:2014,Romero:2017uv}.
This set is a nonlinear, smooth, compact, two-dimensional manifold defined as
\begin{equation}
  \label{eq-stwo}
  S^{2}
  :=
  \left\{
    \bm{d}\in\reals^{3}\mid\bm{d}\cdot\bm{d}=1
  \right\},
\end{equation}
where the dot product refers to the standard Euclidean inner product. In the context of structural
mechanics, especially rods and shells, elements of $S^2$ are sometimes referred to as
\emph{directors} \cite{Simo:1989wu,Romero:2004uo} since they are used in this type of models to
describe relevant geometrical directions.  Special attention must be paid to the fact that this
manifold possesses no special algebraic structure, specifically group-like structure.  

The tangent bundle of the unit 2-sphere is the manifold
\begin{equation}
  \label{eq-tstwo}
  TS^2 :=
  \left\{
    (\mbs{d},\mbs{c}),\
    \mbs{d}\in S^2,\
    \mbs{c}\in\reals^3,\
    \mbs{d}\cdot \mbs{c} = 0
  \right\} .
\end{equation}
Tangent vectors at a point $\mbs{d}\in S^2$ can be alternatively be described by
the relations
\begin{equation}
  \mbs{c} = \mbs{w}\times \mbs{d},
  \quad
  \mathrm{with}
  \quad
  \mbs{w}\cdot \mbs{d}= 0\ ,
\end{equation}
where the symbol ``$\times$'' denotes the cross product between vectors in $\reals^3$.

The unit 2-sphere, together with the metric inherited from $\reals^3$ has the structure of a
Riemannian manifold. By embedding the manifold in Euclidean three dimensional space, the covariant
derivative of a smooth vector field $\mbs{v}:S^2\to TS^2$ along a second vector field
$\mbs{w}:S^2\to TS^2$ is the vector field $\nabla_{\bm{w}}{\bm{v}}$, which evaluated at $\mbs{d}\in
S^2$ coincides with the projection of the derivative $D\mbs{v}$ in the direction of $\mbs{w}$ onto
the tangent plane to $\mbs{d}$.  If $\mbs{I}$ denotes the second order unit tensor and $\otimes$ the
dyadic product between vectors, this projection can be expressed as
\begin{equation}
  \nabla_{\mbs{w}} \mbs{v}
  :=
  (\mbs{I} - \mbs{d}\otimes \mbs{d})
  \;
  D \mbs{v} \cdot \mbs{w}\ .
\end{equation}
In particular, if $\mbs{d}:(a,b)\to S^2$ is a smooth one-parameter curve on the unit
sphere and $\mbs{d}'$ its derivative, the covariant derivative
of a smooth vector field $\mbs{v}:S^2\to TS^2$ in the direction of $\mbs{d}'$
can be evaluated as
\begin{equation}
  \nabla_{\mbs{d}'} \mbs{v}
  =
  (\mbs{I} - \mbs{d}\otimes \mbs{d})
  \;
  D \mbs{v} \cdot \mbs{d}'
  =
  (\mbs{v}\circ \mbs{d})'
  -
  \left(
    (\mbs{v}\circ \mbs{d})' \cdot \mbs{d}
  \right)
  \mbs{d}, 
\end{equation}
which, as before, is nothing but the projection of $(\mbs{v}\circ \mbs{d})'$
onto the tangent space $T_{\mbs{d}}S^2$.

\subsection{The special orthogonal group}

The set of proper orthogonal tensors also plays a prominent role in rod theories
and can be defined as
\begin{equation}
  SO(3) :=
  \left\{
    \mbs{\Lambda}\in \reals^{3\times3}, \
    \mbs{\Lambda}^T \mbs{\Lambda} = \mbs{\Lambda} \mbs{\Lambda}^T = \mbs{I}, \
    \det{\mbs{\Lambda}} = 1
    \right\} .
  \label{eq-sothree}
\end{equation}
This set possesses a group-like structure when considered with the tensor
multiplication operation, and it is also a smooth manifiold, hence
it is a Lie group. Its Lie algebra is 
the set
\begin{equation}
  so(3) :=
  \left\{
    \hat{\mbs{w}}\in\reals^{3\times3}, \
    \hat{\mbs{w}} = - \hat{\mbs{w}}^T
  \right\}.
  \label{eq-algebra}
\end{equation}
An isomorphism exists between vectors in $\reals^3$ and $so(3)$ defined as $\hat{\cdot}:\reals^3\to
so(3)$ such that for all $\mbs{w},\mbs{a}\in\reals^3$, the tensor $\hat{\mbs{w}}\in so(3)$ satisfies
$\hat{\mbs{w}}\mbs{a} = \mbs{w}\times \mbs{a}$.  The vector $\mbs{w}$ is referred to as the axial
vector of the skew-symmetric tensor $\hat{\mbs{w}}$ and we also write $\skew[\mbs{w}] = \hat{\mbs{w}}$.
The exponential map $\exp:so(3)\to SO(3)$ is a surjective application with a closed form expression given
by Rodrigues' formula
\begin{equation}
  \exp[\hat{\mbs{\theta}}]
  :=
  \mbs{I} +  \frac{\sin\theta}{\theta} \hat{\mbs{\theta}}
  +
  \frac{1}{2} \frac{\sin^2(\theta/2)}{(\theta/2)^2} \hat{\mbs{\theta}}^2\ ,
  \label{eq-rodrigues}
\end{equation}
with $\mbs{\theta}\in\reals^3$,  $\theta =|\mbs{\theta}|$,
and $|\cdot|$ denotes the Euclidean norm. The linearization
of the exponential map is simplified by introducing
the map $\dexp: so(3)\to\reals^{3\times 3}$ that
satisfies
\begin{equation}
  \dd{}{\epsilon}
  \exp[\hat{\mbs{\theta}}(\epsilon)]
  =
  \skew[\dexp[\hat{\mbs{\theta}}(\epsilon)] \mbs{\theta}'(\epsilon)]
  \exp[\hat{\mbs{\theta}}(\epsilon)]
  \label{eq-dexp}
\end{equation}
for every $\mbs{\theta}:\reals\to\reals^3$. Explicit expressions
of this map, and more aspects regarding the numerical treatment
of the rotation group can be found elsewhere \cite{Hairer:2002vg,Rom08,Romero:2017uv}.

\subsection{Composite rotations}
\label{subs-composite}
For any director~$\mbs{d}$, the three-dimensional Euclidean
space can be expressed as the direct sum
\begin{equation}
  \reals^3
  \cong   T_{\mbs{d}}S^2 \oplus \hbox{span}(\mbs{d})\ ,
  \label{eq-rthree-split}
\end{equation}
where $\hbox{span}(\mbs{d})$ is the linear subspace spanned by $\mbs{d}$. Given now two directors
$\mbs{d},\tilde{\mbs{d}}$, we say that a second order tensor $\mbs{T}:\reals^3\to\reals^3$
\emph{splits from $\mbs{d}$ to $\tilde{\mbs{d}}$} if it can be written in the form
\begin{equation}
  \mbs{T}= \pmap{\mbs{T}} + \dmap{\mbs{T}}\:,
  \label{eq-t-split}
\end{equation}
where $\pmap{\mbs{T}}$ is a bijection from $T_{\mbs{d}} S^2$
to $T_{\tilde{\mbs{d}}}S^2$ with $\ker(\pmap{\mbs{T}}) = \hbox{span}(\mbs{d})$,
and $\dmap{\mbs{T}}$ is a bijection from $\hbox{span}(\mbs{d})$ to $\hbox{span}(\tilde{\mbs{d}})$
with $\ker(\dmap{\mbs{T}}) = T_{\mbs{d}}S^2$. The
split~\eqref{eq-t-split} depends on the pair~$\mbs{d},\tilde{\mbs{d}}$
but it is not indicated explicitly in order to simplify the notation.

Let us now consider a one-parameter curve in $S^2$ denoted $\mbs{d}_t$
with $t\in[0,T]$. If $\mbs{d}$ is an arbitrary point on $\mbs{d}_t$,
the two dimensional space $T_{\mbs{d}}S^2$ can be viewed,
when $S^2$ is embedded in $\reals^3$,
as a tangent plane to the unit sphere at $\mbs{d}$.
Between $T_{\mbs{d}_0} S^2$ and $T_{\mbs{d}}S^2$
there exist, hence, infinite isomorphisms. For example, parallel transport
along $\mbs{d}_t$ provides a natural map between these two vector spaces. Another useful
transformation can be obtained from the unique rotation $\nodrill\in SO(3)$
that maps $\mbs{d}_0$ to $\mbs{d}$ \emph{without drill}, as long as
$\mbs{d}_0\ne -\mbs{d}$, and defined by
\begin{equation}
  \nodrill[\mbs{d}_0,\mbs{d}]
  :=
  (\mbs{d}_0\cdot \mbs{d}) \mbs{I}
  +
  \widehat{ \mbs{d}_0\times \mbs{d}}
  +
  \frac{1}{1+\mbs{d}_0\cdot \mbs{d}}
  (\mbs{d}_0\times \mbs{d})
  \otimes
  (\mbs{d}_0\times \mbs{d})\ .
  \label{eq-no-drill}
\end{equation}

The map $\nodrill[\mbs{d}_0,\mbs{d}]$ \emph{splits from $\mbs{d}_0$ to $\mbs{d}$}
and we can define
\begin{equation}
  \pmap{\nodrill}[\mbs{d}_0,\mbs{d}] = \nodrill - \mbs{d}\otimes \mbs{d}_0
  \ , \qquad
  \dmap{\nodrill}[\mbs{d}_0,\mbs{d}] = \mbs{d}\otimes \mbs{d}_0\ .
  \label{eq-nodrill-split}
\end{equation}
The map $\pmap{\nodrill}[\mbs{d}_0,\mbs{d}]$ is, by definition, 
a bijection between $T_{\mbs{d}_0} S^2$ and $T_{\mbs{d}}S^2$ which, in contrast with the one
defined by parallel transport, does not depend on the whole curve $\mbs{d}_t$ but only on the directors
$\mbs{d}_0$ and $\mbs{d}$.

Given a rotation $\nodrill$ that splits from $\mbs{d}_0$ to $\mbs{d}$ and any scalar
$\psi\in S^1$ (the unit circle), the map
\begin{equation}
  \rotation
  = \exp[\psi\, \hat{\mbs{d}}] \nodrill
  = \nodrill \exp[\psi\, \hat{\mbs{d}_0}]
  \label{eq-general-rotation}
\end{equation}
is also a rotation that splits from $\mbs{d}_0$ to $\mbs{d}$. Geometrically, $\rotation$
maps vectors on $T_{\mbs{d}_0}S^2$ to $T_{\mbs{d}}S^2$ with a drill angle equal
to~$\psi$.
This map will play a key role in the theory that follows.

\subsection{The Bishop (natural) frame and torsion}
\label{subs-bishop}

In the current context, it is desirable to describe the cross-section orientation along the
rod by means of an orthogonal frame that is somehow intrinsically related to the curve that
describes the rod in the ambient space.

To introduce such a rotation field, let $\bm{r}:\left[0,L\right]\rightarrow \reals^3$ be a
one-parameter curve with derivative~$\bm{r}'$.  Let us assume that this curve is regular, that is,
$\bm{r}'\ne\bm{0}$ everywhere on its domain. Additionally, we assume, without loss of generality,
that the curve is arc-length parametrized, \ie, $|\bm{r}'|=1$.  Now, let us consider
$\mbs{u},\mbs{v}:[0,L]\to S^2$ such that $\left\lbrace \bm{u}, \bm{v}, \bm{r}' \right\rbrace$ are mutually
orthogonal vector fields along the curve. We say that  $\left\lbrace \bm{u}, \bm{v}, \bm{r}'
\right\rbrace$ minimizes the rotation if the following conditions
\begin{equation}
  \bm{u}'\cdot\bm{v}=\bm{u}\cdot\bm{v}'=0
  \label{eq-mimimum-rotation}
\end{equation}
are satisfied. Let $\{\mbs{u}(s_0),\mbs{v}(s_0),\mbs{r}' (s_0)\} = \{ \mbs{u}_0,\mbs{v}_0,
\mbs{r}'_0\}$ be a known value of the orthonornal triad for some $s_0\in(0,L)$. Using this frame as
initial condition,  Eqs.~\eqref{eq-mimimum-rotation} can be integrated along $\mbs{r}$, defining
uniquely the vector fields $\mbs{u}$ and $\mbs{v}$ at all points of the curve. This curve has
Darboux vector
\begin{equation}
  \bm{k} = \bm{r}'\times\bm{r}''\:,
\end{equation}
which plays an essential role in defining parallel transport.

To analyze the concept of torsion in transported frames, let us consider the rotation field
$\exp[\psi\:\hat{\mbs{r}}']$, with $\psi:[0,L]\to S^1$, and the rotated triad
$\{\mbs{u}_\psi,\mbs{v}_\psi,\mbs{r}'\} = \exp[\psi\:\hat{\mbs{r}}'] \{\mbs{u},\mbs{v},\mbs{r}'\}$,
with
\begin{equation}
  \bm{u}_{\psi} = \cos(\psi)\bm{u}+\sin(\psi)\bm{v}
  \ ,\qquad
  \bm{v}_{\psi} =-\sin(\psi)\bm{u}+\cos(\psi)\bm{v}\:.
\end{equation}
The rotated frame has Darboux vector
\begin{equation}
	\bm{\omega} =\bm{k}+\psi'\bm{r}',
\end{equation}
or, equivalently,
\begin{equation}
\bm{\omega} = -(\bm{v}_{\psi}\cdot\bm{r}'')\bm{u}_{\psi}+(\bm{u}_{\psi}\cdot\bm{r}'')\bm{v}_{\psi}+\psi'\bm{r}'\:,
\label{eq-omega}
\end{equation}
where $\psi'$ is the torsion and $\psi$ is the torsion angle (for further details, see
\cite{AmericanMathematica:6j0BCmRM,Langer:1996wj}). The previous calculations show that
the frame $\{\mbs{u},\mbs{v},\mbs{r}'\}$ -- which is known as the natural or Bishop frame --
is the unique one obtained by transporting $\{\mbs{u}_0,\mbs{v}_0,\mbs{r}'_0\}$ along
the curve \emph{without torsion}.

Given, as before, a known frame $\{\mbs{u}_0,\mbs{v}_0,\mbs{r}'_0\}$ at a point $s_0\in(0,L)$, one
could transport it along the curve using the drill-free rotation given by
Eq.~\eqref{eq-no-drill}. Ignoring, for the moment, that this map is only defined as long as
$\mbs{r}'\ne \mbs{r}'_0$, we note that this alternatively transported frame has, in general, an induced torsion.  To see this, we observe that the Darboux vector emanated from the drill-free
rotation~$\nodrill[\mbs{r}'_0,\mbs{r}']$ is
\begin{equation}
  \mbs{\omega}_{\nodrill} = \mbs{r}'\times\mbs{r}''
  +\left(\mbs{a}\cdot\mbs{r}''\right)\mbs{r}'\ , 	 
\end{equation}
\noindent where
\begin{equation}
\mbs{a} =-\frac{\mbs{d}_0\times\mbs{r}'}{1+\mbs{d}_0\cdot \mbs{r}'}\ . 	
\end{equation}
The vector $\mbs{a}$, responsible for the torsion of the transported frame, is non-vanishing in
general, as anticipated.

More importantly, the term $\mbs{a}\cdot\mbs{r}''$ in the curvature $\mbs{\omega}_{\nodrill}$ is
non-integrable, meaning that there is no scalar function such that its derivative with respect to
arc-length produces torsion in the sense of Eq.~\eqref{eq-omega}.
As a result, the drill-free map cannot be corrected via a composite map~\eqref{eq-general-rotation}
yielding a frame that has no torsion. To verify
this assertion, we calculate the Darboux vector of this composite map to be
\begin{equation}
  \mbs{\omega}_{\rotation}
  =
  \mbs{r}'\times\mbs{r}''+\left( \psi'+\mbs{a}\cdot\mbs{r}''\right)\mbs{r}'\ . 	 
\end{equation}
If the map is to have vanishing torsion, the rotation angle $\psi$ would have
to satisfy
\begin{equation}
	\psi'+\mbs{a}\cdot\mbs{r}'' = 0\ ,
\end{equation}
or, explicitly,
\begin{equation}
  \psi(s) =
  -\int_{s_0}^s
  \mbs{a}\cdot \mbs{r}''
  \,\mathrm{d} \mu
  \ ,
\end{equation}
which clearly is a non-local quantity.
\section{Geometrically exact rods}
\label{sec-formulation}
We define in this section general concepts related to the kinematics of \emph{geometrically exact
rods}, that is, rod models for which no approximation whatsoever is made on the size of their
deformation. We describe first a very general model and then we indicate what constraints
can be imposed on it to obtain theories with reduced kinematics.

\subsection{General description}
\label{subs-general}
As customary, a rod is defined to be a three-dimensional deformable body whose length is much larger
than its other two dimensions. Such a body can be described by a curve in $\reals^3$ and a cross
section at every point of the curve whose intersection is precisely the barycenter of the
section. We will henceforth assume that these cross sections remain plane and undistorted at all
possible configurations of the rod. Models that account for the distorsion of the cross
section can be found elsewhere (\eg, \cite{Antman:1995wm}).

Let $L$ denote the length of the rod's centerline in the undeformed configuration, and $s\in[0,L]$
an arc-length coordinate employed to identify each point of the rod.  To describe the configuration
of the rod we select a fixed coordinate system with a Cartesian orthonormal basis
$\{\mbs{E}_i\}_{i=1}^3$. The position of the centerline point of arc-length coordinate~$s$ is
denoted $\mbs{r}(s)$. Moreover, since the cross section of the rod is assumed to remain plane and
undistorted at all configurations, two orthonormal vectors $\left\lbrace
\mbs{e}_1(s),\mbs{e}_2(s)\right\rbrace $ span the cross section with coordinate $s$ at all time, and we
assume that they are material vectors. Defining a third unit vector $\mbs{e}_3(s)=\mbs{e}_1(s)\times
\mbs{e}_2(s)$ we can uniquely identify the orientation of the cross section at $s$ by the tensor
$\mbs{\Lambda}(s)\in SO(3)$ such that
\begin{equation}
  \mbs{\Lambda}(s) \mbs{E}_i = \mbs{e}_i(s)\ .
  \label{eq-lambda-section}
\end{equation}

To make the presentation more precise, let us consider a rod that is clamped at
$s=0$. Based on the previous description, the configuration manifold of a
general rod of this type is the set
\begin{equation}
  Q :=
  \left\{
    (\mbs{r},\mbs{\Lambda}):[0,L]\to \reals^3\times SO(3),\ 
    \ \mbs{r}(0) = \bar{\mbs{r}}, \ \mbs{\Lambda}(0)= \bar{\mbs{\Lambda}}
  \right\},
  \label{eq-Q}
\end{equation}
where the functions $\mbs{r}$ and $\mbs{\Lambda}$ are assumed to be smooth
and $\bar{\mbs{r}},\bar{\mbs{\Lambda}}$ are known. Let us
stress here that this configuration space is not for Kirchhoff rods, but for
general models.

Among all possible configurations in $Q$, the reference configuration of the
rod is defined to be $(\mbs{r}_0,\mbs{\Lambda}_0)$, and we choose, without any loss
of generality, that the argument of these two functions coincides with the
arc-length of the curve defined by $\mbs{r}_0$, \ie,
\begin{equation}
  |\mbs{r}_0'(s)| = 1\ ,
  \label{eq-normalization}
\end{equation}
for all $s\in[0,L]$.

\subsection{Strain measures}
\label{subs-strains}
We study next all the possible ways in which a general rod, as defined in Section~\ref{subs-general},
can deform. More detailed discussions of the deformation modes of these rods
can be found in \cite{Simo:1985ur,Antman:1995wm}.

The first deformation measure is the axial strain, and gauges the local relative
value of the rod elongation, defined as
\begin{equation}
  \epsilon := (\mbs{r}'\cdot \mbs{e}_3) - 1\ .
  \label{eq-epsilon}
\end{equation}
The second measure is the shear strain, and it is defined as the vector
\begin{equation}
  \mbs{\sigma} := (\mbs{r}'\cdot \mbs{e}_\alpha)\; \mbs{e}_\alpha \quad \textrm{with} \quad \alpha = 1, 2\ .
  \label{eq-sigma}
\end{equation}
The relative change in the normal to the cross section is accounted for
by the bending strain
\begin{equation}
  \mbs{\kappa} := \mbs{e}_3 \times \mbs{e}_3'\ .
  \label{eq-bending}
\end{equation}
The last measure is the torsional strain defined as the scalar
\begin{equation}
  \tau := \mbs{e}_2\cdot \mbs{e}_1'\  ,
  \label{eq-torsion-strain}
\end{equation}
and accounts for the relative rotation of the cross section about the director.
For convenience, these measures are often gathered in two vectors
\begin{equation}
  \mbs{\gamma} := \mbs{\sigma} + \epsilon\, \mbs{e}_3 \ ,
  \qquad
  \mbs{\omega} := \mbs{\kappa} + \tau\, \mbs{e}_3\ ,  
  \label{eq-strains-spatial}
\end{equation}
which can be related with the centerline vector and orientation rotation
through the relations
\begin{equation}
  \mbs{\gamma} = \mbs{r}' - \mbs{e}_3\ ,
  \qquad
  \hat{\mbs{\omega}} = \mbs{\Lambda}' \mbs{\Lambda}^T\ .
  \label{eq-strains-formulae}
\end{equation}
The vectors $\mbs{\gamma}$ and $\mbs{\omega}$ are spatial strain measures
whose convected counterparts are, respectively,
\begin{equation}
  \mbs{\Gamma} := \mbs{\Lambda}^T \mbs{\gamma} \ ,
  \qquad
  \mbs{\Omega} := \mbs{\Lambda}^T \mbs{\omega}\ .
  \label{eq-strains-convected}
\end{equation}
and from these, we define the convected shear and bending strains
\begin{equation}
  \mbs{\Sigma} := \mbs{\Lambda}^T \mbs{\sigma}\ ,
  \qquad
  \mbs{K} := \mbs{\Lambda}^T \mbs{k}\ ,
  \label{eq-convected-sk}
\end{equation}
respectively.
All of these measures are frame invariant under the (left) action of
the special Euclidean group $SE(3)$, \ie, the group of isometries on the configuration
space $Q$. See \cite{Simo:1985ur,Antman:1995wm}.

\subsection{Stored energy and stress resultants}
\label{subs-energy}
In this section we study the most general hyperelastic model
for a rod, and identify the stress resultants that are
work-conjugate to the strains identified in Section~\ref{subs-strains}.

To start, let us postulate the existence of an objective stored energy function of
the form $U=U(\mbs{\sigma},\epsilon,\mbs{\kappa},\tau; s)$. The objectivity of this
function implies that
\begin{equation}
  U( \mbs{\sigma}, \epsilon, \mbs{\kappa}, \tau; s)
  =
  U( \mbs{Q}\mbs{\sigma}, \epsilon , \mbs{Q}\mbs{\kappa}, \tau; s)
\end{equation}
for every $\mbs{Q}\in SO(3)$. By selecting, in particular, $\mbs{Q}=\mbs{\Lambda}^T$, 
it follows that
\begin{equation}
  U(\mbs{\sigma}, \epsilon, \mbs{\kappa}, \tau; s) =
  \tilde{U}(\mbs{\Sigma}, \epsilon, \mbs{K}, \tau; s)
  .
  \label{eq-strain-energy}
\end{equation}
Hence, by expressing the stored energy in terms of convected measures --- which are
invariant under the action of the special Euclidean group --- we guarantee its objectivity.

The differential of the stored energy function can be calculated as
\begin{equation}
  \mathrm{d}\,\tilde{U} =
  \pd{\tilde{U}}{\mbs{\Sigma}} \cdot \mathrm{d}\, \mbs{\Sigma} +
  \pd{\tilde{U}}{\epsilon} \mathrm{d}\,{\epsilon} +
  \pd{\tilde{U}}{\mbs{K}}\cdot \mathrm{d}\,\mbs{K} +
  \pd{\tilde{U}}{\tau} \mathrm{d}\,\tau
  \ ,
  \label{eq-stress-power}
\end{equation}
and we define the convected stress resultants
\begin{equation}
  \mbs{\Xi} :=\pd{\tilde{U}}{\mbs{\Sigma}}\ ,
  \qquad
  \saxial := \pd{\tilde{U}}{\epsilon}\ ,
  \qquad
  \Bending := \pd{\tilde{U}}{\mbs{K}}\ ,
  \quad
  \storsion := \pd{\tilde{U}}{\tau}\ ,
  \label{eq-stress}
\end{equation}
which are work conjugate, respectively, to the shear, axial, bending, 
and torsional strain measures in their convected form. The spatial
stress resultants are the push-forwards $(\mbs{\xi}, \saxial, \bending, \storsion)
=(\mbs{\Lambda} \mbs{\Xi}, \saxial, \mbs{\Lambda}\Bending,\storsion)$.

For the remainder of this article, we will focus on beam models that have no shear deformation, \ie\
$\mbs{\Xi}=\mbs{0}$. In these situations, hence, it will be unnecessary to account for the
contribution to the internal energy due to shear and we will always assume that the latter is of the
form $\tilde{U}= \tilde{U}(\epsilon,\mbs{K},\tau; s)$.

\subsection{Momenta and kinetic energy}
\label{subs-momenta}
Let us now consider a motion of the rod, that is, a time-parameterized curve in configuration space
described by a pair of functions $(\mbs{r}(s,t), \mbs{\Lambda}(s,t) ) $ such that
$(\mbs{r}(\cdot,t),\mbs{\Lambda}(\cdot,t))\in Q$ for all $t\in[0,T]$.  Using  a superposed dot as the
notation for the derivative with respect to time, the generalized velocity of the rod is the vector field
$(\dot{\mbs{r}},\dot{\mbs{\Lambda}})$ belonging, for every $t\in[0,T]$, to the tangent bundle
\begin{equation}
  TQ :=
  \left\{(\mbs{r},\mbs{\Lambda}; \mbs{y}, \mbs{Y}):[0,L]  \to \reals^3\times SO(3) \times \reals^3\times TSO(3),\
    \mbs{y}(0) = \mbs{0}, \
    \mbs{Y}(0) = \mbs{0}
  \right\} .
  \label{eq-tq}
\end{equation}
The time derivative of the centroid position is the velocity
$\mbs{v}=\dot{\mbs{r}}$, and the derivative of the rotation can be written as
\begin{equation}
  \dot{\mbs{\Lambda}} = \hat{\mbs{w}} \mbs{\Lambda} = \mbs{\Lambda} \widehat{ \mbs{W}}\ ,
  \label{eq-omegas}
\end{equation}
where $\mbs{w}$ and $\mbs{W}$ are the spatial and convected angular
velocities, respectively. These fields can be split as in
\begin{equation}
  \mbs{w} = \mbs{w}_{\perp} + w_\parallel \mbs{e}_3\ ,
  \qquad
  \mbs{W} = \mbs{W}_\perp + W_\parallel \mbs{E}_3\ ,
  \label{eq-omegas-split}
\end{equation}
with
\begin{equation}
  \mbs{w}_{\perp} = \mbs{e}_3\times \dot{\mbs{e}_3}\ ,
  \qquad
  \mbs{W}_{\perp} = \mbs{\Lambda}^T \mbs{w}_{\perp}\ .
  \qquad
  \label{eq-omegas-perp}
\end{equation}
The kinetic energy density of the rod is defined as
\begin{equation}
  k :=
  \frac{1}{2} A_\rho |\mbs{v}|^2 +
  \frac{1}{2} \mbs{w} \cdot \mbs{i}_\rho \mbs{w}\ .
  \label{eq-kinetic-density}
\end{equation}
The constant $A_\rho$ is equal to the product of the cross section area
and the density of the rod's material. The tensor $\mbs{i}_\rho$ is
the spatial inertia of the cross section. The convected
inertia, constant in time, is the pullback of $\mbs{i}_\rho$ to the
reference configuration, namely,
\begin{equation}
  \mbs{I}_\rho :=
  \mbs{\Lambda}^T \mbs{i}_\rho \mbs{\Lambda}\ .
  \label{eq-convected-inertia}
\end{equation}
The convected inertia $\mbs{I}_\rho:\reals^3\to\reals^3$ is a second order, symmetric, positive
definite tensor that splits from $\mbs{E}_3$ to $\mbs{E}_3$, \ie, it is of the form
\begin{equation}
  \mbs{I}_\rho = \mbs{I}_\perp + I_\parallel\, \mbs{E}_3\otimes \mbs{E}_3\ ,
  \label{eq-irho-decomp}
\end{equation}
where $\mbs{I}_\perp$ maps bijectively $\mathrm{span}(\mbs{E}_1,\mbs{E}_2)$ onto
itself and satisfies $\mbs{I}_\perp \mbs{E}_3 =
\mbs{0}$. By pushing this inertia to the current configuration, we find that
the spatial inertia admits a similar split
\begin{equation}
  \mbs{i}_\rho = \mbs{i}_\perp + i_\parallel\, \mbs{e}_3\otimes \mbs{e}_3\ ,
  \label{eq-irho-decomp-spatial}
\end{equation}
where now $\mbs{i}_\perp$ splits from $\mbs{e}_3$ to $\mbs{e}_3$.  As a
consequence of the structure of the inertia tensor, the rotational part of the kinetic energy
density can be written in the following equivalent ways:
\begin{equation}
  \frac{1}{2} \mbs{w} \cdot \mbs{i}_\rho \mbs{w}
  =
  \frac{1}{2} \mbs{W} \cdot \mbs{I}_\rho \mbs{W}
  =
  \frac{1}{2} \mbs{W}_\perp \cdot \mbs{I}_\perp \mbs{W}_\perp
  +
  \frac{1}{2} I_\parallel\, W_\parallel^2
  =
  \frac{1}{2} \mbs{w}_\perp \cdot \mbs{i}_\perp \mbs{w}_\perp
  +
  \frac{1}{2} i_\parallel\, w_\parallel^2 .
  \label{eq-kinetic-density-convected}
\end{equation}

The translational and rotational momenta of the rod are conjugate
to the velocities as in
\begin{equation}
  \mbs{p}
  :=
  \pd{k}{\mbs{v}} = A_\rho \mbs{v}\:,
  \qquad
  \hbox{and}
  \qquad
  \mbs{\pi}
  :=
  \pd{k}{\mbs{w}} = \mbs{i}_\rho \mbs{w}\:,
  \label{eq-momenta}
\end{equation}
and we note that we can introduce a convected version of
the momentum $\mbs{\pi}$ by pulling it back, as before, and defining
\begin{equation}
  \mbs{\Pi} := \mbs{\Lambda}^T \mbs{\pi} = \pd{k}{\mbs{W}}  \ .
  \label{eq-pi-convected}
\end{equation}
Due to the particular structure of the inertia
the momenta can also be split, as before, as in
\begin{equation}
  \mbs{\pi} = \mbs{\pi}_\perp + \pi_\parallel \mbs{e}_3\ ,
  \qquad
  \mbs{\Pi} = \mbs{\Pi}_\perp + \Pi_\parallel \mbs{E}_3
  \label{eq-pi-split}
\end{equation}
with
\begin{equation}
  \mbs{\pi}_\perp = \mbs{i}_\perp \mbs{w}_\perp\ ,
  \qquad
  \mbs{\Pi}_\perp = \mbs{I}_\perp \mbs{W}_\perp ,
  \qquad
  \pi_\parallel = \Pi_\parallel = i_\parallel w_\parallel = I_\parallel W_\parallel \ .
  \label{eq-pi-split-components}
\end{equation}
%
\section{The canonical variational principle for Kirchhoff rods}
\label{sec-canonical}
In this section we obtain the governing equations of Kirchhoff rods from
Hamilton's principle of stationary action in the most straightforward way.

\subsection{Lagrangian}
\label{subs-k-lagrangian}
Kirchhoff rods are a subclass of the general rod models described in Section~\ref{sec-formulation},
with the condition of vanishing shear deformation, \ie, $\mbs{\Sigma=\mbs{0}}$. In view of Eq.~\eqref{eq-sigma},
this constraint is equivalent to having $\mbs{r}'$ parallel to
the vector~$\mbs{e}_3$, or equivalently
\begin{equation}
  \mbs{0}
  =
  (\mbs{\Lambda}^T \mbs{r}') \times \mbs{E}_3\ .
  \label{eq-constrain-shear}
\end{equation}
Instead of trying to define a configuration space for this class of rods, we will
work with the general space defined in Eq.~\eqref{eq-Q}, and constrain the
kinematics of the model using Lagrange multipliers to impose pointwise
relation~\eqref{eq-constrain-shear}.

To make the exposition as simple as possible, we assume that the rod is clamped
at the section corresponding to $s=0$. Moreover, a known
conservative distributed force, per unit of reference length,
$\bar{\mbs{n}}:[0,L]\to\mathbb{R}^3$ is applied along the rod, and a conservative point
force $\tilde{\mbs{n}}$ is applied at the free end, namely $s=L$. Under this loading,
the potential energy of the rod is 
\begin{equation}
  V = \int_0^L \tilde{U}(\epsilon, \mbs{K}, \tau; s) \,\mathrm{d} s
  -
  \int_0^L \bar{\mbs{n}}\cdot \mbs{r} \,\mathrm{d} s
  -
  \tilde{\mbs{n}}\cdot \mbs{r}(L)\ .
  \label{eq-kirchhoff-v}
\end{equation}
Using the definitions of Section~\ref{subs-momenta}, the kinetic energy
of this system can be calculated to be
\begin{equation}
  T =
  \int_0^L
  \left(
    \frac{1}{2} A_\rho |\dot{\mbs{r}}|^2
    +
    \frac{1}{2} \mbs{w}\cdot \mbs{i}_\rho \mbs{w}
  \right)
  \,\mathrm{d} s\ ,
  \label{eq-kirchhoff-t}
\end{equation}
and the Lagrangian is just $\mathcal{L}=T-V$.

\subsection{Variations of the strain measures and rates}
\label{subs-k-variations}
The stationarity conditions for the action will be obtained using
calculus of variations. We gather next some results that will prove
necessary for the computation of the functional derivatives and, later,
for the linearization of the model.

To introduce these concepts, let us consider a curve of configurations
$(\mbs{r}_\iota, \mbs{\Lambda}_\iota)$ parametrized by the scalar $\iota$ and given by
\begin{equation}
  \left(\mbs{r}_\iota(s,t),\mbs{\Lambda}_\iota(s,t)\right)
  =
  \left(
    \mbs{r}(s,t) + \iota\, \delta\mbs{r}(s,t) ,
    \exp[\iota\, \widehat{\delta\mbs{\theta}}(s,t)] \mbs{\Lambda}(s,t)
  \right)\ ,
  \label{eq-one-parameter}
\end{equation}
where $\delta \mbs{r}:[0,L]\times[0,T]\to\reals^3$ and
$\delta\mbs{\theta} :[0,L]\times[0,T]\to\reals^3$ are arbitrary variations
that satisfy
\begin{equation}
  \delta \mbs{r}(0,t) =
  \delta \mbs{r}(s,0) = \delta \mbs{r}(s,T) = \mbs{0}\ ,
  \qquad
  \delta \mbs{\theta}(0,t) =
  \delta \mbs{\theta}(s,0) = \delta \mbs{\theta}(s,T) = \mbs{0} \ .
  \label{eq-variations}
\end{equation}
The curve $(\mbs{r}_\iota,\mbs{\Lambda}_\iota)$ passes through the configuration
$(\mbs{r},\mbs{\Lambda})$ when $\iota=0$ and has tangent at this point equal to
\begin{equation}
  \left. \pd{}{\iota} \right|_{\iota=0}
  (\mbs{r}_\iota,\mbs{\Lambda}_\iota)
  =
  \left( \delta \mbs{r}, \widehat{\delta\mbs{\theta}} \mbs{\Lambda} \right) .
  \label{eq-tangent-variation}
\end{equation}
For future reference let us calculate the variation of the derivative $\mbs{\Lambda}'$.
To do so, let us first define the arc-length derivative of the perturbed rotation, that is,
\begin{equation}
  \begin{split}    
    \pd{}{s} \mbs{\Lambda}_\iota
    &=
    \pd{}{s} \exp[\iota \widehat{\delta\mbs{\theta}}] \mbs{\Lambda}
    =
    \skew\left[ \dexp[\iota \widehat{\delta \mbs{\theta}}] \iota \delta \mbs{\theta}'\right]
    \mbs{\Lambda} + \exp[\iota \widehat{\delta \mbs{\theta}}] \mbs{\Lambda}' \ .
  \end{split}
  \label{eq-d-lambdaprime}
\end{equation}
Then, the variation of $\mbs{\Lambda}'$ is just
\begin{equation}
    \delta ( \mbs{\Lambda}' )
    =
    \left.\pd{}{\iota}\right|_{\iota=0}
    \pd{}{s} \mbs{\Lambda}_\iota
    =
    \widehat{\delta \mbs{\theta}'} \mbs{\Lambda} + \widehat{\delta \mbs{\theta}} \mbs{\Lambda}'\ .
  \label{eq-var-lambdaprime}
\end{equation}
With the previous results we can now proceed to calculate the variations of
the strain measures, as summarized in the following theorem.

\begin{theorem}
  The linearization of the three strain measures $(\epsilon, \mbs{K}, \tau)$
  is
\begin{equation}
  \begin{split}
    \delta\epsilon
    &= \mbs{e}_3\cdot \delta \mbs{r}' + \mbs{e}_3\times \mbs{r}'\cdot\delta \mbs{\theta}
    \ ,
    \\
    \delta \mbs{K}
    &=
    \rotation^T
      (\mbs{I}-\mbs{e}_3 \otimes \mbs{e}_3) \delta \mbs{\theta}'
      \ ,
    \\
    \delta\tau &=
    \mbs{e}_3 \cdot \delta \mbs{\theta}'\ .
  \end{split}
   \label{eq-strain-linear}
\end{equation}
\end{theorem}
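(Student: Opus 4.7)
The plan is to substitute the one-parameter family $(\mbs{r}_\iota,\mbs{\Lambda}_\iota)$ from Eq.~\eqref{eq-one-parameter} into each of the definitions of $\epsilon$, $\tau$, and $\mbs{K}$ given in Eqs.~\eqref{eq-epsilon}, \eqref{eq-torsion-strain}, and \eqref{eq-convected-sk}, and then differentiate at $\iota=0$. The single identity used repeatedly is that body-carried vectors vary as $\delta(\mbs{\Lambda}\mbs{E}_i)=\widehat{\delta\mbs{\theta}}\,\mbs{e}_i=\delta\mbs{\theta}\times\mbs{e}_i$, which follows directly from Eq.~\eqref{eq-tangent-variation}. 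The computation of $\delta\mbs{K}$ is the delicate one; I will handle the two scalar variations first because they are straightforward warm-ups.

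For $\delta\epsilon$, I would write $\epsilon_\iota=\mbs{r}_\iota'\cdot(\mbs{\Lambda}_\iota\mbs{E}_3)-1$ and apply the product rule. The two resulting terms are $\delta\mbs{r}'\cdot\mbs{e}_3$ and $\mbs{r}'\cdot(\delta\mbs{\theta}\times\mbs{e}_3)$; the cyclic property of the triple product converts the second into $\delta\mbs{\theta}\cdot(\mbs{e}_3\times\mbs{r}')$, which matches the claim. For $\delta\tau$, Leibniz applied to $\tau_\iota=\mbs{e}_{2,\iota}\cdot\mbs{e}_{1,\iota}'$ produces three contributions: the two involving $\delta\mbs{\theta}$ rearrange via the scalar triple product into $\delta\mbs{\theta}\cdot(\mbs{e}_2\times\mbs{e}_1'+\mbs{e}_1'\times\mbs{e}_2)$, which vanishes, while the surviving term $\mbs{e}_2\cdot(\delta\mbs{\theta}'\times\mbs{e}_1)=\delta\mbs{\theta}'\cdot(\mbs{e}_1\times\mbs{e}_2)=\mbs{e}_3\cdot\delta\mbs{\theta}'$.

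For $\delta\mbs{K}$ I would work with the full convected curvature $\widehat{\mbs{\Omega}}=\mbs{\Lambda}^T\mbs{\Lambda}'$ rather than $\mbs{K}$ directly. Using Eq.~\eqref{eq-var-lambdaprime} and $\delta(\mbs{\Lambda}^T)=-\mbs{\Lambda}^T\widehat{\delta\mbs{\theta}}$, the product rule gives
\begin{equation*}
\delta\widehat{\mbs{\Omega}}
= -\mbs{\Lambda}^T\widehat{\delta\mbs{\theta}}\,\mbs{\Lambda}'
+\mbs{\Lambda}^T\widehat{\delta\mbs{\theta}'}\mbs{\Lambda}
+\mbs{\Lambda}^T\widehat{\delta\mbs{\theta}}\,\mbs{\Lambda}'
= \mbs{\Lambda}^T\widehat{\delta\mbs{\theta}'}\mbs{\Lambda}
= \widehat{\mbs{\Lambda}^T\delta\mbs{\theta}'}\,,
\end{equation*}
so that $\delta\mbs{\Omega}=\mbs{\Lambda}^T\delta\mbs{\theta}'$. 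The cancellation of the two $\widehat{\delta\mbs{\theta}}$ pieces is the main technical step, and is the only place where skew-symmetric algebra is essential.

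Finally, I would split $\mbs{\Omega}=\mbs{K}+\tau\mbs{E}_3$, noting from Eq.~\eqref{eq-convected-sk} and Eq.~\eqref{eq-bending} that $\mbs{K}\perp\mbs{E}_3$. Hence $\delta\mbs{K}$ is the projection of $\delta\mbs{\Omega}$ onto the plane orthogonal to $\mbs{E}_3$, that is, $\delta\mbs{K}=(\mbs{I}-\mbs{E}_3\otimes\mbs{E}_3)\mbs{\Lambda}^T\delta\mbs{\theta}'$. The stated form with $\rotation^T(\mbs{I}-\mbs{e}_3\otimes\mbs{e}_3)$ is then immediate from the identity $\mbs{\Lambda}^T(\mbs{I}-\mbs{e}_3\otimes\mbs{e}_3)=(\mbs{I}-\mbs{E}_3\otimes\mbs{E}_3)\mbs{\Lambda}^T$, reading $\rotation=\mbs{\Lambda}$ in the Kirchhoff setting (a composite rotation from $\mbs{E}_3$ to $\mbs{e}_3$ in the sense of Section~\ref{subs-composite}). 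As a sanity check, taking the $\mbs{E}_3$ component of $\delta\mbs{\Omega}=\mbs{\Lambda}^T\delta\mbs{\theta}'$ gives $\mbs{E}_3\cdot\mbs{\Lambda}^T\delta\mbs{\theta}'=\mbs{e}_3\cdot\delta\mbs{\theta}'=\delta\tau$, consistent with the previous paragraph.
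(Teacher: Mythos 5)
Your computations are correct, and the overall strategy --- substituting the one-parameter family \eqref{eq-one-parameter} into the strain definitions and differentiating at $\iota=0$ --- is the same as the paper's; the treatments of $\delta\epsilon$ and $\delta\tau$ match the paper's essentially line for line. Where you genuinely diverge is in $\delta\mbs{K}$. The paper linearizes $\mbs{K}_\iota=\mbs{E}_3\times(\mbs{\Lambda}_\iota^T\mbs{e}_{3,\iota}')$ directly, combining $\delta\mbs{\Lambda}^T=-\mbs{\Lambda}^T\widehat{\delta\mbs{\theta}}$ with Eq.~\eqref{eq-var-lambdaprime} and the cross-product identity $\mbs{e}_3\times(\delta\mbs{\theta}'\times\mbs{e}_3)=\delta\mbs{\theta}'-(\delta\mbs{\theta}'\cdot\mbs{e}_3)\mbs{e}_3$ to land on $\mbs{\Lambda}^T(\mbs{I}-\mbs{e}_3\otimes\mbs{e}_3)\delta\mbs{\theta}'$. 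You instead linearize the full convected curvature $\widehat{\mbs{\Omega}}=\mbs{\Lambda}^T\mbs{\Lambda}'$, where the two $\widehat{\delta\mbs{\theta}}$ terms cancel to give the clean identity $\delta\mbs{\Omega}=\mbs{\Lambda}^T\delta\mbs{\theta}'$, and then project onto the plane orthogonal to $\mbs{E}_3$ using $\mbs{\Omega}=\mbs{K}+\tau\mbs{E}_3$ with $\mbs{K}\cdot\mbs{E}_3=0$ and the commutation $(\mbs{I}-\mbs{E}_3\otimes\mbs{E}_3)\mbs{\Lambda}^T=\mbs{\Lambda}^T(\mbs{I}-\mbs{e}_3\otimes\mbs{e}_3)$. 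Your route is arguably tidier: the single identity $\delta\mbs{\Omega}=\mbs{\Lambda}^T\delta\mbs{\theta}'$ delivers $\delta\mbs{K}$ and $\delta\tau$ simultaneously (as your sanity check observes), at the cost of introducing $\mbs{\Omega}$, which the paper's more elementary componentwise calculation avoids. Both arrive at the same result, and your reading of $\rotation^T$ as $\mbs{\Lambda}^T$ is consistent with the paper's notation.
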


\begin{proof}
The strain measures of the one-parameter curve of configurations
$(\mbs{r}_\iota,\mbs{\Lambda}_\iota)$ are
\begin{equation}
  \epsilon_\iota = \mbs{r}_\iota'\cdot \mbs{e}_{3,\iota} - 1\ ,
  \qquad
  \mbs{K}_\iota = \mbs{E}_3 \times ( \mbs{\Lambda}^T_{\iota} \mbs{e}'_{3,\iota})\ ,
  \qquad
  \tau_\iota =  \mbs{e}_{2,\iota} \cdot  \mbs{e}'_{1,\iota} 
  \label{eq-str-proof-0}
\end{equation}
where $\mbs{e}_{i,\iota} = \mbs{\Lambda}_\iota \mbs{E}_i$. 
The variation of the axial strain measure is computed as follows:
\begin{equation}
  \delta \epsilon
  =
  \left. \pd{}{\iota} \right|_{\iota=0}
  \left( \mbs{r}'_\iota\cdot \mbs{e}_{3,\iota} - 1 \right)
  =
  \mbs{e}_3\cdot \delta \mbs{r}' + \mbs{e}_3\times \mbs{r}' \cdot \delta \mbs{\theta}\ .
  \label{eq-str-proof-1}
\end{equation}
The variation of the bending strain is obtained from its
definition employing some algebraic manipulations and
expression~\eqref{eq-var-lambdaprime} as follows:
\begin{equation}
  \begin{split}
  \delta \mbs{K}
  &=
  \left. \pd{}{\iota} \right|_{\iota=0}
  \left(
    \mbs{E}_3 \times ( \mbs{\Lambda}^T_\iota \mbs{e}'_{3,\iota})
  \right)
  \\
  &=
  \mbs{E}_3 \times
  \left(
    \delta \mbs{\Lambda}^T \mbs{\Lambda}' \mbs{E}_3 + \mbs{\Lambda}^T
    \delta \mbs{\Lambda}' \mbs{E}_3
  \right)
  \\
  &=
  \mbs{\Lambda}^T
  \left(
    \mbs{e}_3 \times
    \left( \widehat{\delta \mbs{\theta}'}\times \mbs{e}_3 \right)
    \right)
  \\
  &=
    \mbs{\Lambda}^T
    \left(
      \delta \mbs{\theta}' - ( \delta \mbs{\theta}' \cdot \mbs{e}_3) \mbs{e}_3
    \right)
    \\
  &=
    \mbs{\Lambda}^T
	(\mbs{I}-\mbs{e}_3 \otimes \mbs{e}_3) \delta \mbs{\theta}'    
  \ .
  \end{split}
  \label{eq-str-proof-2}
\end{equation}
The linearization of the torsional strain follows similar steps:
\begin{equation}
  \begin{split}
    \delta\tau
    &=
    \left. \pd{}{\iota} \right|_{\iota=0}
    \left( \mbs{e}_{2,\iota}\cdot \mbs{e}'_{1,\iota} \right)
    \\
    &=
    \widehat{\delta \mbs{\theta}} \mbs{\Lambda} \mbs{E}_2\cdot \mbs{\Lambda}' \mbs{E}_1
    +
    \mbs{\Lambda} \mbs{E}_2 \cdot
    \left(
      \widehat{\delta\mbs{\theta}'}\mbs{\Lambda} + \widehat{\delta\mbs{\theta}}\mbs{\Lambda}'
    \right)
    \mbs{E}_1
    \\
    &=
    \mbs{e}_1\times \mbs{e}_2 \cdot \delta \mbs{\theta}'
        \\
    &=
    \mbs{e}_3 \cdot \delta \mbs{\theta}'\ .    
  \end{split}
   \label{eq-str-proof-3}
\end{equation}
\end{proof}

The linearization of the rates is almost identical to the one of the
strains. For convenience, we present without proof the attendant results in the following theorem:

\begin{theorem}
  The linearization of the three time rates $(\mbs{v},\mbs{W}_\perp,w_\parallel)$ is:
  \begin{equation}
  \begin{split}
    \delta \mbs{v} &= \delta \dot{\mbs{r}} \ , \\
    \delta \mbs{W}_\perp
    &=
    \mbs{\Lambda}^T (\mbs{I}-\mbs{e}_3 \otimes \mbs{e}_3) \delta \dot{\mbs{\theta}} \ ,
    \\
    \delta w_\parallel
    &=
    \mbs{e}_3 \cdot \delta \dot{\mbs{\theta}}\ .
  \end{split}
   \label{eq-rates-linearization}
\end{equation}
\end{theorem}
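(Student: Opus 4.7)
The plan is to exploit a direct structural analogy between the rotational rate quantities $(\mbs{W}_\perp,w_\parallel)$ and the strain measures $(\mbs{K},\tau)$, reducing those parts of the proof to a relabeling of the computation already carried out for the strains. From $\hat{\mbs{w}}=\dot{\mbs{\Lambda}}\mbs{\Lambda}^T$ one immediately obtains $\dot{\mbs{e}}_i=\mbs{w}\times\mbs{e}_i$, from which it follows that $\mbs{W}_\perp=\mbs{\Lambda}^T(\mbs{e}_3\times\dot{\mbs{e}}_3)=\mbs{E}_3\times(\mbs{\Lambda}^T\dot{\mbs{e}}_3)$ and $w_\parallel=\mbs{w}\cdot\mbs{e}_3=\mbs{e}_2\cdot\dot{\mbs{e}}_1$. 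These are exactly the expressions in Eq.~\eqref{eq-str-proof-0} for $\mbs{K}$ and $\tau$, with every arc-length derivative $(\cdot)'$ replaced by a time derivative $(\dot{\cdot})$.

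Next I would replicate the preparation in Eqs.~\eqref{eq-d-lambdaprime}--\eqref{eq-var-lambdaprime}, differentiating with respect to $t$ rather than $s$. Applying Eq.~\eqref{eq-dexp} with $t$ playing the role of $\epsilon$ to the parameterization~\eqref{eq-one-parameter} and then taking $\partial/\partial\iota$ at $\iota=0$ one obtains the time analogue
\begin{equation}
\delta\dot{\mbs{\Lambda}} = \widehat{\delta\dot{\mbs{\theta}}}\,\mbs{\Lambda} + \widehat{\delta\mbs{\theta}}\,\dot{\mbs{\Lambda}}\ ,
\end{equation}
which mirrors Eq.~\eqref{eq-var-lambdaprime} exactly. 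The translational part is immediate: since $\mbs{r}_\iota=\mbs{r}+\iota\,\delta\mbs{r}$, differentiating first in $t$ and then in $\iota$ at $\iota=0$ yields $\delta\mbs{v}=\delta\dot{\mbs{r}}$.

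Finally, for $\delta\mbs{W}_\perp$ and $\delta w_\parallel$ I would transcribe the algebraic manipulations of Eqs.~\eqref{eq-str-proof-2} and~\eqref{eq-str-proof-3} verbatim, with $\mbs{\Lambda}'$ replaced by $\dot{\mbs{\Lambda}}$ and $\delta\mbs{\theta}'$ replaced by $\delta\dot{\mbs{\theta}}$ throughout. Those derivations rely only on the Leibniz rule, on the identity $\mbs{e}_3\times(\mbs{a}\times\mbs{e}_3)=(\mbs{I}-\mbs{e}_3\otimes\mbs{e}_3)\mbs{a}$, and on $\mbs{e}_1\times\mbs{e}_2=\mbs{e}_3$, none of which depend on the nature of the differentiation parameter. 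There is accordingly no real obstacle: the substance of the theorem is the observation that the strain derivation is insensitive to whether $\mbs{\Lambda}$ is differentiated along the rod or in time, so the same algebraic steps yield the stated formulas.
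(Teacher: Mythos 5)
Your proposal is correct, and it follows exactly the route the paper intends: the theorem is stated without proof precisely because, as the text notes, the linearization of the rates is ``almost identical to the one of the strains,'' and your observation that $\mbs{W}_\perp=\mbs{E}_3\times(\mbs{\Lambda}^T\dot{\mbs{e}}_3)$ and $w_\parallel=\mbs{e}_2\cdot\dot{\mbs{e}}_1$ are the time analogues of Eq.~\eqref{eq-str-proof-0} makes that analogy precise. The transcription of Eqs.~\eqref{eq-var-lambdaprime}, \eqref{eq-str-proof-2} and \eqref{eq-str-proof-3} with $(\cdot)'$ replaced by $(\dot{\cdot})$ is valid for the reason you give, so nothing further is needed.
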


\subsection{Governing equations}
\label{subs-kir-equations}

Hamilton's principle of stationary action states that the governing
equations of the clamped-free rod are the Euler-Lagrange equations of the
constrained action functional
\begin{equation}
  S =
  \int_0^T
  \left(
    T - V
    - \int_0^L \mbs{\eta}\cdot (\mbs{\Lambda}^T \mbs{r}')\times \mbs{E}_3
    \,\mathrm{d} s
  \right)
  \,\mathrm{d} t\ ,
  \label{eq-kirchhoff-action}
\end{equation}
with unknown fields $(\mbs{r}, \mbs{\Lambda}, \dot{\mbs{r}},\dot{\mbs{\Lambda}})$ in $TQ$ and
Lagrange multiplier $\mbs{\eta}:[0,L]\to \mathrm{span}(\mbs{E}_1,\mbs{E}_2)$. The main result of the section is as
follows:

\begin{theorem}
	\label{thm-k-g}
  The Euler-Lagrange equations of the action~\eqref{eq-kirchhoff-action} are:
  \begin{subequations}
    \begin{align}
      \mbs{n}' + \bar{\mbs{n}} &= \dot{\mbs{p}}\ , & \label{eq-el-k1-1} \\
      \mbs{m}' + \mbs{r}'\times \mbs{n} &= \dot{\mbs{\pi}}\ , & \label{eq-el-k1-2}\\
      (\mbs{\Lambda}^T \mbs{r}') \times \mbs{E}_3 &= \mbs{0}\ , & \label{eq-el-k1-3}
  \end{align}
   \label{eq-el-k1}
 \end{subequations}
 where the stress resultants are defined as
 \begin{equation}
   \begin{split}     
   \mbs{n} &= \shear + \axial\ , 
   \qquad
   \shear = (\mbs{\Lambda}\mbs{\eta})\times\mbs{e}_3\ ,
   \qquad
   \axial = \pd{\tilde{U}}{\epsilon} \mbs{e}_3\ ,
   \\
   \mbs{m} &= \bending + \torsion\ ,
   \qquad
   \bending = \mbs{\Lambda}\pd{\tilde{U}}{\mbs{K}}\ ,
   \qquad
   \torsion = \pd{\tilde{U}}{\tau} \mbs{e}_3\ .
   \end{split}
  \label{eq-el-k2}
\end{equation}
Eq.\eqref{eq-el-k1} can be rewritten as
\begin{subequations}
  \begin{align}
    \left( \axial + \frac{\mbs{d}}{|\mbs{r}'|}\times \nabla_{\mbs{d}'} \mbs{m} \right)'
    + \bar{\mbs{n}}
    &=
       \: \dot{\mbs{p}} +
       \left(
       \frac{\mbs{d}}{|\mbs{r}'|} \times \nabla_{\dot{\mbs{d}}} \mbs{\pi}\right)' \ ,  \label{eq-el-k3-1}
    \\
    \mbs{m}' \cdot \mbs{d}
     &= \dot{\mbs{\pi}} \cdot \mbs{d}\ ,
	 \label{eq-el-k3-2}
  \end{align}
   \label{eq-el-k3}
\end{subequations}
where $\mbs{d} = \mbs{e}_3 = \frac{\mbs{r}'}{|\mbs{r}'|}$. The natural boundary conditions at $s=L$ are
\begin{equation}
  \axial+\frac{\mbs{d}}{|\mbs{r}'|}\times \left(\nabla_{\mbs{d}'}
    \mbs{m}-\nabla_{\dot{\mbs{d}}}\mbs{\pi}\right) = \tilde{\mbs{n}}
  \ ,\qquad
  \mbs{d}\times\mbs{m} = \mbs{0}\ . 
	\label{eq-L-bc-k3}
\end{equation}

\end{theorem}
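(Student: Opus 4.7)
The plan is to impose $\delta S = 0$ on the augmented action~\eqref{eq-kirchhoff-action} for independent variations $\delta\mbs{r}$, $\delta\mbs{\theta}$ (generating the curve~\eqref{eq-one-parameter}) and $\delta\mbs{\eta}$. The multiplier variation returns the Kirchhoff constraint~\eqref{eq-el-k1-3} directly, so the work lies in processing the other two. The stored-energy variation follows from Theorem~4.1, and after pairing with the convected stress resultants~\eqref{eq-stress} and pushing them forward to the spatial forms~\eqref{eq-el-k2} it collapses to
\begin{equation*}
  \int_0^L \bigl[\axial\cdot\delta\mbs{r}'
    - (\mbs{r}'\times\axial)\cdot\delta\mbs{\theta}
    + \mbs{m}\cdot\delta\mbs{\theta}'\bigr]\,\mathrm{d}s,
\end{equation*}
where $\mbs{m}=\bending+\torsion$. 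The kinetic variation, via $\delta\mbs{w}=\delta\dot{\mbs{\theta}}+\delta\mbs{\theta}\times\mbs{w}$ and $\delta\mbs{i}_\rho=\widehat{\delta\mbs{\theta}}\mbs{i}_\rho-\mbs{i}_\rho\widehat{\delta\mbs{\theta}}$, reduces cleanly to $\int_0^L[\mbs{p}\cdot\delta\dot{\mbs{r}}+\mbs{\pi}\cdot\delta\dot{\mbs{\theta}}]\,\mathrm{d}s$, the two convective terms cancelling pairwise. For the constraint contribution, the identity $\delta(\mbs{\Lambda}^T\mbs{r}')=\mbs{\Lambda}^T(\delta\mbs{r}'+\mbs{r}'\times\delta\mbs{\theta})$ together with the scalar triple product turns the pairing with $\mbs{\eta}$ into couplings of $\delta\mbs{r}'$ and $\delta\mbs{\theta}$ against the combination $\shear=(\mbs{\Lambda}\mbs{\eta})\times\mbs{e}_3$.

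Integration by parts in $s$ and $t$ follows next: the boundary terms at $s=0$ and $t=0,T$ vanish by~\eqref{eq-variations}, and the temporal derivatives produce $\dot{\mbs{p}}$ and $\dot{\mbs{\pi}}$. Collecting interior coefficients of $\delta\mbs{r}$ and $\delta\mbs{\theta}$ yields the balance laws~\eqref{eq-el-k1-1}--\eqref{eq-el-k1-2} with the stress resultants~\eqref{eq-el-k2}, while the residual $s=L$ terms give the natural conditions $\mbs{n}(L)=\tilde{\mbs{n}}$ and $\mbs{m}(L)=\mbs{0}$.

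For the director-only form~\eqref{eq-el-k3} and~\eqref{eq-L-bc-k3}, I would eliminate $\shear$ by crossing~\eqref{eq-el-k1-2} from the left with $\mbs{d}$. Because $\mbs{r}'=|\mbs{r}'|\mbs{d}$ and $\axial\parallel\mbs{d}$, the triple product collapses to $\mbs{d}\times(\mbs{r}'\times\mbs{n}) = -|\mbs{r}'|\shear$. Moreover, using the splitting $\reals^3 = T_{\mbs{d}}S^2\oplus\mathrm{span}(\mbs{d})$ of~\eqref{eq-rthree-split}, the $\mbs{d}$-components of $\mbs{m}'$ and $\dot{\mbs{\pi}}$ are killed by $\mbs{d}\times$, so those derivatives can be replaced by their covariant counterparts $\nabla_{\mbs{d}'}\mbs{m}$ and $\nabla_{\dot{\mbs{d}}}\mbs{\pi}$ from Section~\ref{subs-stwo}, giving $\shear = \frac{\mbs{d}}{|\mbs{r}'|}\times(\nabla_{\mbs{d}'}\mbs{m} - \nabla_{\dot{\mbs{d}}}\mbs{\pi})$. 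Substituting this into~\eqref{eq-el-k1-1} and into $\mbs{n}(L)=\tilde{\mbs{n}}$ produces~\eqref{eq-el-k3-1} and the first of~\eqref{eq-L-bc-k3}. Finally, dotting~\eqref{eq-el-k1-2} with $\mbs{d}$ gives~\eqref{eq-el-k3-2} at once, since $(\mbs{r}'\times\mbs{n})\cdot\mbs{d}=0$, and $\mbs{d}\times\mbs{m}(L)=\mbs{0}$ is the transverse projection of $\mbs{m}(L)=\mbs{0}$.

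The main obstacle I expect is bookkeeping: tracking the triple-product identities and signs in the variation of the Lagrange multiplier term, and verifying that the replacement of ordinary by covariant derivatives in the director reformulation is consistent with the $\mbs{d}$-parallel versus $\mbs{d}$-perpendicular decomposition of $\mbs{m}$ and $\mbs{\pi}$ needed to recover~\eqref{eq-el-k3-1}.
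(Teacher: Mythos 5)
Your proposal follows essentially the same route as the paper: the balance laws \eqref{eq-el-k1}--\eqref{eq-el-k2} come from a systematic computation of $\delta S$ using the strain and rate variations (a calculation the paper states but omits, and which you sketch correctly, including the cancellation of the convective terms in the kinetic variation), and the reduced form \eqref{eq-el-k3}--\eqref{eq-L-bc-k3} is obtained exactly as in the paper by solving the moment balance for $\shear$ via the cross product with $\mbs{d}/|\mbs{r}'|$, replacing $\mbs{m}'$ and $\dot{\mbs{\pi}}$ by their covariant projections, and projecting onto $\mbs{d}$. No gaps; your version is, if anything, slightly more explicit than the paper's.
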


\begin{proof} The first part of the theorem follows from a systematic calculation of $\delta S$, the
variation of the action, and the results of Section~\ref{subs-strains}, and thus we omit a detailed derivation. It
should be noted that $\mbs{n}$ is the contact force on the cross section, which can be additively
decomposed on an axial part, $\axial$, and the shear force $\shear$, which appears as a result
of the constraint. The contact torque $\mbs{m}$ itself is the sum of the bending moment $\bending$
and the torsional moment $\torsion$.

Let $\mbs{d}= \mbs{r}'/|\mbs{r}'|$. Assuming $\mbs{r}$ is a smooth function in both $s$ and $t$,
and $|\mbs{r}'|>0$, condition \eqref{eq-el-k1-3} is equivalent to the constraint
$\mbs{d}=\mbs{e}_3$. Then, $\axial = \pd{\tilde{U}}{\epsilon} \mbs{d}$ and  $\mbs{r}'\times
\mbs{n}= \mbs{r}'\times \shear$. From Eq.~\eqref{eq-el-k1-2} if follows that
  \begin{equation}
	 \mbs{r}'\times \shear = \dot{\mbs{\pi}} - \mbs{m}'
  \label{eq-proof-kel-2}
\end{equation}
Since $\mbs{r}'$ and $\shear$ are perpendicular, the shear resultant can be found to be
\begin{equation}
  \shear =
  \frac{\mbs{d}}{|\mbs{r}'|}
  \times
  \left(\mbs{m}' - \dot{\mbs{\pi}}\right)
  =
  \frac{\mbs{d}}{|\mbs{r}'|}
  \times
  \left(
    \nabla_{\mbs{d}'}\mbs{m}
    - 
    \nabla_{\dot{\mbs{d}}}\mbs{\pi}
  \right)
  \label{eq-proof-kel-3}
\end{equation}
which can be inserted in \eqref{eq-el-k1-1} to obtain Eq.~\eqref{eq-el-k3-1}. Then, projecting
Eq.~\eqref{eq-proof-kel-2} onto the $\mbs{d}$ direction, Eq. \eqref{eq-el-k3-2} follows. It bears
emphasis that the covariant derivative ensures that the differentiation is compatible with
the underlying manifold structure.     
\end{proof}

Eqs.~\eqref{eq-el-k3} are derived solely from variational arguments, including the correct boundary
conditions. One possible simplification would be to ignore the contribution of the rotational
inertia, leading to a model discussed by Meier~\cite{Meier:2014dk}, and obtained directly as a
projection of the Simo-Reissner beam theory. Another possible simplification could be
to ignore the extensibility of the beam. A variational derivation of the corresponding
governing equations would be almost identical to the one presented above. However, in this
case, both the axial and shear strains vanish and thus a constraint must be included
that not only imposes that $\mbs{r}'$ is parallel to $\mbs{e}_3$, as in
Eq.~\eqref{eq-constrain-shear}, but rather that these two quantities are identical.
To impose this restriction, the constraint~\eqref{eq-constrain-shear} must be
replaced by the stronger one
\begin{equation}
  \mbs{0}
  =
  \mbs{\Lambda}^T \mbs{r}' - \mbs{E}_3\ .
  \label{eq-kirchhoff-inext}
\end{equation}
Since these are three constraints that must be satisfied pointwise, the field of
Lagrange multipliers that would be required to impose them would be
functions from $[0,L]$ to $\mathbb{R}^3$. A variational principle for
this class of inextensible, Kirchhoff rods has been presented by Antman \cite{Antman:1995wm}.

\subsection{Linearization}
The governing equation of the Kirchhoff rod, as given by Theorem~\ref{thm-k-g} can be
linearized resulting in the Rayleigh model \cite{Han:1999}, as shown in the following result:

\begin{theorem}
  \label{thm-k-linear}
  Consider a straight rod with constant cross section, aligned in its reference
  configuration with the $\mbs{E}_3$ axis and let its centerline position be written as
  \begin{equation}
    \mbs{r}(s,t) = u(s,t)\,\mbs{E}_1 + v (s,t)\,\mbs{E}_2 + (s+w(s,t))\,\mbs{E}_3\ .
    \label{eq-lin-r}
  \end{equation}
  Assuming that the torsion angle is~$\varphi$, the linearized Euler-Lagrange
  equations from Theorem \ref{thm-k-g} are:
  \begin{subequations}
  \label{eq-lin-el-k3}
  \begin{align}
    \varrho A \ddot{u}-\varrho I_{22} \ddot{u}''+EI_{22} u'''' &= \bar{n}_1\, ,
    \label{eq-lin-ray1}\\
    \varrho A \ddot{v}-\varrho I_{11} \ddot{v}''+EI_{11} v''''&= \bar{n}_2\, ,
    \label{eq-lin-ray2}\\
    \varrho A \ddot{w}+EA{w}''&= \bar{n}_3\, ,
                                \label{eq-lin-ray3}\\
    \varrho I_{33} \ddot{\varphi}+GI_{33}\varphi''&= 0\, .
    \label{eq-lin-ray4}
  \end{align}
\end{subequations}%
Eqs.~\eqref{eq-lin-ray1}-\eqref{eq-lin-ray2} correspond to Rayleigh's beam equations for bending, including
rotational inertia \cite{Han:1999};  Eqs.~\eqref{eq-lin-0-bc-k3-3}-\eqref{eq-lin-0-bc-k3-4} model,
respectively, the axial and torsional behavior. The attendant boundary conditions
at $s=0$ are
\begin{subequations}
  \begin{align}
    u(0) &= 0\, \quad \textrm{and} \quad u'(0) = 0 \ , & \label{eq-lin-0-bc-k3-1}\\
    v(0) &= 0\, \quad \textrm{and} \quad v'(0) = 0 \ , & \label{eq-lin-0-bc-k3-2}\\
    w(0) &= 0\ , & \label{eq-lin-0-bc-k3-3}\\
    \varphi(0) &= 0\ . & \label{eq-lin-0-bc-k3-4} 
  \end{align}
  \label{eq-lin-0-bc-k3}
\end{subequations}
Lastly, the natural boundary conditions at $s=L$ are:
\begin{subequations}
  \begin{align}
    \varrho A\ddot{u}'+EI_{22}u''' &= \tilde{n}_1\, \quad \textrm{and} \quad EI_{22}u'' = 0 \ , & \label{eq-lin-L-bc-k3-1}\\
    \varrho A\ddot{v}'+EI_{11}v''' &= \tilde{n}_2\, \quad \textrm{and} \quad EI_{11}v'' = 0 \ , & \label{eq-lin-L-bc-k3-2}\\
    EAw' &= \tilde{n}_3\ , & \label{eq-lin-L-bc-k3-3}\\
    GI_{33}\varphi' &= 0\ . & \label{eq-lin-L-bc-k3-4} 
  \end{align}
  \label{eq-lin-L-bc-k3}
\end{subequations}
\end{theorem}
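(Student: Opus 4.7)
The plan is to substitute a perturbation ansatz into the nonlinear Euler--Lagrange equations \eqref{eq-el-k3-1}--\eqref{eq-el-k3-2} and retain only terms linear in the small displacements $(u,v,w,\varphi)$. First I would compute all kinematic quantities to first order. From $\mbs{r}' = u'\mbs{E}_1 + v'\mbs{E}_2 + (1+w')\mbs{E}_3$ follow $|\mbs{r}'|\approx 1+w'$ and $\mbs{d}=\mbs{r}'/|\mbs{r}'|\approx u'\mbs{E}_1+v'\mbs{E}_2+\mbs{E}_3$. The Kirchhoff constraint identifies $\mbs{e}_3=\mbs{d}$, so two tilt angles of the cross section are slaved to $u'$ and $v'$, and the remaining rotational degree of freedom is $\varphi$. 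Plugging into the strain definitions of Section~\ref{subs-strains} gives $\epsilon\approx w'$, $\mbs{K}\approx -v''\mbs{E}_1+u''\mbs{E}_2$, $\tau\approx\varphi'$, and analogously $\mbs{w}_\perp\approx -\dot v'\mbs{E}_1+\dot u'\mbs{E}_2$, $w_\parallel\approx\dot\varphi$.

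Second, I would posit the simplest quadratic transversely isotropic stored energy
\begin{equation*}
\tilde U = \tfrac{1}{2}EA\,\epsilon^2+\tfrac{1}{2}EI_{11}K_1^2+\tfrac{1}{2}EI_{22}K_2^2+\tfrac{1}{2}GI_{33}\tau^2,
\end{equation*}
from which the linearized stress resultants and momenta become $\axial\approx EAw'\mbs{E}_3$, $\bending\approx -EI_{11}v''\mbs{E}_1+EI_{22}u''\mbs{E}_2$, $\torsion\approx GI_{33}\varphi'\mbs{E}_3$, $\mbs{p}\approx\varrho A\,\dot{\mbs{r}}$, and $\mbs{\pi}\approx -\varrho I_{11}\dot v'\mbs{E}_1+\varrho I_{22}\dot u'\mbs{E}_2+\varrho I_{33}\dot\varphi\mbs{E}_3$. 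Because $\mbs{m}$ and $\mbs{\pi}$ are already first-order, both covariant derivatives in \eqref{eq-el-k3-1} reduce to ordinary derivatives at leading order and the prefactor $\mbs{d}/|\mbs{r}'|$ may be replaced by $\mbs{E}_3$. Projecting the resulting linear equation onto the Cartesian basis yields~\eqref{eq-lin-ray1}--\eqref{eq-lin-ray3}, while projecting \eqref{eq-el-k3-2} onto $\mbs{E}_3$ produces~\eqref{eq-lin-ray4}. The essential conditions at $s=0$ follow by linearizing $\mbs{r}(0)=\bar{\mbs{r}}$ and $\mbs{\Lambda}(0)=\mbs{I}$: the first gives $u(0)=v(0)=w(0)=0$ and the second gives $\varphi(0)=0$ directly and, through the Kirchhoff slaving $\mbs{e}_3(0)=\mbs{E}_3$, also $u'(0)=v'(0)=0$. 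The natural conditions at $s=L$ come from substituting the same linearized resultants and momenta into~\eqref{eq-L-bc-k3} and projecting onto the basis vectors.

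The main obstacle will be the careful bookkeeping of orders inside the covariant-derivative and cross-product expressions in~\eqref{eq-el-k3-1}: one must verify that the first-order corrections to $\mbs{d}$ and $|\mbs{r}'|$ drop out of the leading-order balance because the quantities they multiply are themselves already first-order, and that the projection $(\mbs{I}-\mbs{d}\otimes\mbs{d})$ implicit in $\nabla_{\mbs{d}'}\mbs{m}$ collapses to $(\mbs{I}-\mbs{E}_3\otimes\mbs{E}_3)$ without loss at this order. Once this reduction is justified, the decoupling of the four scalar equations and of the boundary conditions is essentially algebraic, and the identification with Rayleigh's beam equations is immediate.
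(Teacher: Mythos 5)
Your proposal is correct and follows essentially the same route as the paper: linearize all kinematic quantities ($\mbs{d}$, $|\mbs{r}'|$, the strains, the momenta and the resultants, with the cross-section rotation slaved to $u'$, $v'$ and $\varphi$ via the Kirchhoff constraint), substitute into Eqs.~\eqref{eq-el-k3}--\eqref{eq-L-bc-k3}, discard second-order terms, and project onto the Cartesian basis; the paper does exactly this via an explicit $\varsigma$-expansion, including the verification you flag that the covariant derivatives and the prefactor $\mbs{d}/|\mbs{r}'|$ collapse to ordinary derivatives and $\mbs{E}_3$ at leading order. Your explicit statement of the quadratic stored-energy function is a small addition the paper leaves implicit, but it does not change the argument.
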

\begin{proof}
Let us introduce the smallness parameter $\varsigma$ and redefine the displacement fields $u$,
$v$ and $w$ as $\varsigma \bar{u}$, $\varsigma \bar{v}$ and $\varsigma \bar{w}$,
respectively. Similarly, the torsion angle $\varphi$ is redefined as
$\varsigma\bar{\varphi}$. Such re-scaled fields allow us to reveal the order of each term in the
smallness parameter ($\varsigma^n$ with $n = 0,1,\ldots,N,\ldots,\infty$). The linearization
consists in retaining only those terms up to order $n=1$ and therefore, every term for $n>1$ is then
collected by the term $\Order{\varsigma}{2}$ since no further order distinction is necessary.

Next, we employ the asymptotic expansion to linearize the governing equation and boundary
conditions. For this purpose, the rotation tensor associated to the cross section of the rod can be
expressed as
\begin{equation}
\mbs{\Lambda} = \mbs{I}-
\varsigma \bar{v}'\hat{\ui}+
\varsigma\bar{u}'\hat{\uj}+
\varsigma\bar{\varphi}\hat{\uk}+
\Order{\varsigma}{2} \,. 
\end{equation}
Let us not that $\mbs{\Lambda}$ is not an arbitrary rotation tensor, but one that guarantees that
normal vector to the cross section remains parallel to $\mbs{r}'$ and rotates about that same
direction an angle $\varphi$ with respect to the natural frame.

The following expansions follow directly from the definition of the nonlinear terms:
\begin{subequations}\label{eq-lins}
   \begin{align}
     \bm{d} &= \frac{\bm{r}'}{|\bm{r}'|} = \uk + \varsigma \bar{u}'\ui
              + \varsigma \bar{v}'\uj + \varsigma \bar{w}'\uk +
     \Order{\varsigma}{2}\, ,
     \label{eq-lin-1}\\
     \bm{p} &= \varrho A \varsigma \dot{\bar{u}}\ui + \varrho A \varsigma
     \dot{\bar{v}}\uj + \varrho A \varsigma \dot{\bar{w}}\uk\, ,
     \label{eq-lin-p}\\
     \bm{\pi} &= -\varrho I_{11}\varsigma\dot{\bar{v}}'\ui + \varrho I_{22}\varsigma\dot{\bar{u}}'\uj
     +
     \varrho I_{33}\varsigma\dot{\bar{\varphi}}\uk+\Order{\varsigma}{2}\, ,
     \label{eq-lin-pi}\\
     \frac{\bm{d}}{|\bm{r}'|} \times\nabla_{\dot{\bm{d}}}{\bm{\pi}}
     &=
     -\varrho I_{22}\varsigma\ddot{\bar{u}}'\ui-\varrho I_{11}\varsigma\ddot{\bar{v}}'\uj+\Order{\varsigma}{2}\, ,
     \label{eq-lin-drcovpi}\\
     \axial &= EA \varsigma \bar{w}'\uk+\Order{\varsigma}{2}\, ,
     \label{eq-lin-na} \\
     \bm{m} &= -EI_{11}\varsigma\bar{v}''\ui+EI_{22}\varsigma\bar{u}''\uj+GI_{33}\varsigma\bar{\varphi}'\uk+\Order{\varsigma}{2}\, ,
     \label{eq-lin-m} \\
     \frac{\bm{d}}{|\bm{r}'|}\times  \nabla_{\bm{d}'} {\bm{m}} &=
     -E I_{22}\varsigma\bar{u}'''\ui-E I_{11}\varsigma\bar{v}'''\uj+\Order{\varsigma}{2}\, .
     \label{eq-lin-drcovm}
   \end{align}
\end{subequations}
Replacing these expansions in Eqs.~\eqref{eq-el-k3}-\eqref{eq-L-bc-k3}, removing the terms
of order $\Order{\varsigma}{2}$, and recovering the original fields $u$, $v$, $w$ and $\varphi$, we
finally obtain Eqs.~\eqref{eq-lin-el-k3}-\eqref{eq-lin-L-bc-k3}.
\end{proof}

\section{Interlude: Kirchhoff rods theories with composite rotations}
\label{sec-composite}
The Kirchhoff rod models described in Section~\ref{sec-canonical} are formulated in a configuration
space~\eqref{eq-Q} identical to the one employed in rods with shear. The Kirchhoff constraint is
then variationally included in the action and Lagrange multipliers are added to impose it.

Several works can be found in the literature where the use of Lagrange multipliers is avoided by
defining a new configuration space, one in which the constraint is already accounted for, sparing
the need for Lagrange multipliers in the variational formulation (\eg,\cite{Boyer:2004,Boyer:2011hv,
Greco:2013co, Meier:2014dk, Meier:2017gs}). If $\mbs{d}=\mbs{r}'/|\mbs{r}'|$, these references
define the rotation of each cross section as a composite map
\begin{equation}
  \mbs{\Lambda} = \exp[\psi\, \hat{\mbs{d}}]\, \mbs{R}[\mbs{d}_0,\mbs{d}] .
  \label{eq-composite}
\end{equation}
One rotation $\mbs{R}$ maps $\mbs{d}_0$ to $\mbs{d}$ and is followed by a rotation about
$\mbs{d}$ of angle $\psi$. Such a rotation definition requires only a smooth field $\mbs{r}$ plus an additional
scalar for the second rotation, avoiding a full parameterization of the rotation field and
the use of Lagrange multipliers altogether.

These formulations seem very appealing, since they not only remove the rotation group from the
configuration space, but avoid the use of Lagrange multipliers. They suffer however, from three
drawbacks that might be important. First, this kind of composite parameterizations often has
singularities. A common choice for $\mbs{R}[\mbs{d}_0,\mbs{d}]$ is $\mbs{\chi}[\mbs{d}_0,\mbs{d}]$,
the unique rotation that maps $\mbs{d}_0$ to $\mbs{d}$ without drill defined in
Eq.~\eqref{eq-no-drill}, that is undefined when $\mbs{d} = - \mbs{d}_0$. Such singularities might
not be relevant for problems with small rotations, or incremental solutions. However, a complete,
self-contained theory of Kirchhoff rods cannot be based on a singular parameterization.

The second drawback of composite parameterizations is related to the imposition of boundary
conditions. Taking the derivative with respect to the arc-length of Eq.~\eqref{eq-composite} the
curvature $\mbs{K}$ and torsional strain $\tau$ can be calculated. A careful derivation shows that,
if $\mbs{R}$ is obtained with expression~\eqref{eq-no-drill}, the strain $\mbs{\omega}$ is
of the form
\begin{equation}
  \mbs{\omega} = k_1 \mbs{e}_1 + k_2 \mbs{e}_2 + (\psi' + \phi) \mbs{d}\ ,
  \label{eq-dis-omega}
\end{equation}
with $k_\alpha = (\mbs{d}\times\mbs{d}')\cdot\mbs{e}_\alpha$ for $\alpha=1,2$, and the torsional
part of the strain is not $\psi'$ but has and additional term $\phi$ that comes from the torsion of
the drill-free frame, as discussed in Section~\ref{subs-bishop}.  This new term might result in
difficulties for imposing boundary conditions, since the variationally consistent Dirichlet boundary
conditions for angles involve $\psi$ and a complex functions of $\mbs{r}$ and~$\mbs{r}'$.

One could use an intermediate rotation $\mbs{R}$ whose torsion is zero and depends only on the curve
$\mbs{r}$, that is, Bishop's frame $\mbs{B}$.  Using such a frame would remove the indeterminacy in
the boundary conditions, since the strain $\mbs{\omega}$ would be simply of the form
\begin{equation}
  \mbs{\omega} = k_1 \mbs{e}_1 + k_2 \mbs{e}_2 + \psi' \mbs{d}\ ,
  \label{eq-bishop-omega}
\end{equation}
with $\mbs{e}_1=\mbs{B}\mbs{E}_1, \mbs{e}_2 = \mbs{B}\mbs{E}_2$. While using this frame
would solve many of the difficulties alluded to above, there is no explicit
expression for Bishop's frame depending solely on $\mbs{d}_0$ and $\mbs{d}$.

The third drawback is more subtle and has not been identified, to the authors' knowledge,
previously. When using composite frames, the angular velocity is a complex function.
For a Bishop (natural) frame one might expect that the angular velocity of the cross section
would be of the form
\begin{equation}
  \mbs{w} = w_1 \mbs{e}_1 + w_2 \mbs{e}_2 + \dot{\psi} \mbs{d}\ ,
  \label{eq-bishop-w}
\end{equation}
but this is not the case. The component of the angular velocity in the $\mbs{d}$ direction can not
be calculated in closed form, and a Lagrangian can not be formulated in terms
of $\mbs{r},\psi$ and their derivatives.

To further illustrate this issue, let us compute the angular velocity
of the composite rotation~\eqref{eq-composite}. It has the form
\begin{equation}
  \mbs{w}_{\rotation} = \mbs{d}\times\dot{\mbs{d}}+\left(\dot{\psi}+\mbs{a}\cdot\dot{\mbs{r}}'\right)\mbs{d}\ , 
\end{equation}
where
\begin{equation}
\mbs{a} =-\frac{1}{|\mbs{r}'|}\frac{\mbs{d}_0\times\mbs{d}}{1+\mbs{d}_0\cdot \mbs{d}}\ . 	
\end{equation}
If the angular velocity is to have a simple expression of the form~\eqref{eq-bishop-w}, we could try to
replace the spin rotation $\exp[\psi \hat{\mbs{d}}]$ in Eq.~\eqref{eq-composite} by
$\exp[ (\psi+\xi) \hat{\mbs{d}}]$, so that the new angular velocity would be of the form
\begin{equation}
  \mbs{w}_{\rotation} = \mbs{d}\times\dot{\mbs{d}}+\left(\dot{\psi}+\dot{\xi}+\mbs{a}\cdot\dot{\mbs{r}}'\right)\mbs{d}\ ,
\end{equation}
and the term $\mbs{a}\cdot\dot{\mbs{r}}'$ could be eliminated by imposing the constraint
\begin{equation}
  \dot{\xi}+\mbs{a}\cdot\dot{\mbs{r}}'=0\ .
  \label{eq-rotation-constraint}
\end{equation}
As proven in Section~\ref{subs-bishop}, a restriction of this type is non-integrable. Moreover, choosing
such a correction angle might spoil the simple form of the (spatial) curvature of the frame.

To see this more explicitly, consider the convected strain $\mbs{\Omega}$ and
angular velocity $\mbs{W}$ of the composite frame whose components in the $\mbs{E}_3$
direction are, respectively, $\psi'$ and $\dot{\phi}$. A general result for the
compatibility of strain and angular velocity  \cite{Coleman:1993hm} states that
\begin{equation}
  \dot{\mbs{\Omega}} - \mbs{W}' = \mbs{\Omega}\times \mbs{W}\ .
  \label{eq-ti-compatibility}
\end{equation}
From this relation, a straightforward manipulation gives
\begin{equation}
  \dot{\psi}' - \dot{\phi}' = \mbs{d}\cdot \mbs{d}'\times \dot{\mbs{d}}\ ,
  \label{eq-ti-comp2}
\end{equation}
proving that, in general, $\psi\ne\phi $.



\section{Variational principles for transversely isotropic Kirchhoff rods}
\label{sec-ti}
In this section we study rods that have transversely isotropic cross
sections, in the sense that the stored energy function and the
in-plane inertia are of the form
\begin{equation}
  \tilde{U}(\epsilon, \mbs{K}, \tau; s)=
  \bar{U}(\epsilon, |\mbs{K}|, \tau; s)
  \ ,
  \qquad
  \mbs{I}_\perp = I_\perp (\mbs{E}_1\otimes \mbs{E}_1 + \mbs{E}_2\otimes \mbs{E}_2)\ .
  \label{eq-ti-energy}
\end{equation}
We note that $|\mbs{K}| = |\mbs{\kappa}| = |\mbs{e}_3'|$ is a scalar bending strain
that is frame invariant. This class of Kirchhoff rods are interesting for two
reasons: first, they are very common in applications and second, they admit certain simplifications
in their formulation as compared with the models presented in Section~\ref{sec-canonical}. In
particular, the quasistatic formulation admits a greatly simplified variational principle. The
dynamic case, however, demands a careful consideration.

\subsection{Quasistatic problems}
The simple form of the stored energy function for transversely isotropic rods allows to bypass
completely the use of rotations in the formulation, although a new unknown field, the twist $\psi$,
is added to account for the torsional deformation. Geometrically speaking, the isotropy of the
bending response avoids the need to \emph{pull-back} the bending strain to the reference
configuration in order to calculate the bending moment, since it is known \emph{a priori} that
this moment is parallel to the bending strain. Since the rotation tensor is needed for
this transformation and for the computation of the torsional strain, only the latter
remains necessary. In the following derivation, however, we avoid the use of the rotation altogether
by introducing an additional configuration term.

The configuration space of this model, for the rod clamped at $s=0$, is the manifold
\begin{equation}
  Q :=
  \left\{
    (\mbs{r},\psi):[0,L]\to \mathbb{R}^3\times \reals,\
    \mbs{r}(0)=\mbs{0},\: \mbs{r}'(0) = \mbs{E}_3,\: \psi(0) = 0
  \right\}\ ,
  \label{eq-ti-q}
\end{equation}
and the potential energy is the functional
\begin{equation}
  V = \int_0^L \bar{U}(\epsilon, |\mbs{K}|, \tau; s) \,\mathrm{d} s
  - \int_0^L ( \bar{\mbs{n}}\cdot \mbs{r} + \bartorsion\psi )
  \, \mathrm{d} s -\tilde{\mbs{n}}\cdot\mbs{r}(L)
  - \tildetorsion \psi(L)\ ,
  \label{eq-ti-v}
\end{equation}
with $\bar{\mbs{n}},\bartorsion$ being, respectively,
known fields of forces and tangent moment per unit length,
and $\tilde{\mbs{n}},\tildetorsion$ a known point force
and a known tangent moment at the end $s=L$.
In this case, the strain measures have the simple form
\begin{equation}
  \epsilon := \mbs{r}'\cdot\mbs{d} - 1\ ,
  \qquad
  |\mbs{K}| := |\mbs{d}'| \ ,
  \qquad
  \tau := \psi'
  \label{eq-ti-strains}
\end{equation}
where, as before, $\mbs{d} := \mbs{r}'/|\mbs{r}'|$. The governing equations
of this model are obtained as the stationarity conditions of~$V$.

\subsubsection{Strain variations}
The strains defined in Eq.~\eqref{eq-ti-strains} are based on the
strain measures of the general rod model (cf. Section~\ref{subs-strains}). However,
since the rotation is no longer an independent field, the variations
also need to be redefined. To calculate the strain variations we introduce,
as in Section~\ref{subs-k-variations}, a curve of perturbed configurations in~$Q$
\begin{equation}
  \left(\mbs{r}_\iota(s), \psi_\iota(s)\right)
  =
  \left(
    \mbs{r}(s) + \iota\, \delta\mbs{r}(s) ,
    \psi(s) + \iota\, \delta\psi(s)
  \right)\ ,
  \label{eq-ti-one-parameter}
\end{equation}
where $\iota\in\reals$, $\delta \mbs{r}:[0,L]\to\reals^3$ and $\delta\psi :[0,L]\to\reals$ are arbitrary fields
that satisfy
\begin{equation}
  \delta \mbs{r}(0) = \mbs{0}\ ,
  \qquad
  \delta \psi(0) = 0 \ .
  \label{eq-ti-variations}
\end{equation}
The advantage of the new configuration space is apparent already, since the definition
of the perturbed configurations is additive, avoiding the use of the exponential map.

The variations of the strain measures are presented in the following result.

\begin{theorem}
  \label{thm-ti-variations}
  The variations of the strains in the transversally isotropic Kirchhoff rod
  are:
  \begin{equation}
  \begin{split}
    \delta \epsilon &= \mbs{d}\cdot \delta \mbs{r}' ,\\
    \delta |\mbs{K}| &= \mbs{d}\times \frac{\mbs{d}'}{|\mbs{d}'|}\cdot \delta \mbs{\beta}' , \\
    \delta \tau &= \delta \psi' . \\
  \end{split}
   \label{eq-thm-ti-vars}
 \end{equation}
 with
 \begin{equation}
   \delta \mbs{\beta} := \frac{1}{|\mbs{r}'|} \mbs{d}\times \nabla_{\mbs{d}'} \delta \mbs{r}\ .
  \label{eq-ti-beta}
\end{equation}
\end{theorem}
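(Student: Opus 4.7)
The plan is to treat the three strains separately since the first and third are essentially trivial once we unpack the definitions, leaving the bending variation as the only delicate calculation. For $\delta\epsilon$, I would first observe that $\epsilon = \mbs{r}'\cdot\mbs{d} - 1 = |\mbs{r}'|-1$, so $\delta\epsilon = \delta |\mbs{r}'| = (\mbs{r}'\cdot \delta\mbs{r}')/|\mbs{r}'| = \mbs{d}\cdot\delta\mbs{r}'$, which is the first claim. For $\delta\tau$, since $\tau=\psi'$ and the perturbation of $\psi$ is additive by~\eqref{eq-ti-one-parameter}, one immediately gets $\delta\tau = \delta\psi'$.

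The heart of the proof is $\delta|\mbs{K}|$. My strategy would be to introduce an intermediate object, the variation $\delta\mbs{d}$ of the unit tangent, and express it in terms of $\delta\mbs{\beta}$. A direct computation from $\mbs{d}=\mbs{r}'/|\mbs{r}'|$ yields
\begin{equation}
  \delta \mbs{d} \;=\; \frac{1}{|\mbs{r}'|}\bigl(\mbs{I}-\mbs{d}\otimes\mbs{d}\bigr)\,\delta\mbs{r}'\ ,
\end{equation}
which is exactly the covariant-derivative form written in Section~\ref{subs-stwo}. Cross-multiplying by $\mbs{d}$ on the left and using the double-cross identity $(\mbs{d}\times\mbs{v})\times\mbs{d}=\mbs{v}-(\mbs{v}\cdot\mbs{d})\mbs{d}$ shows that $\delta\mbs{d}=\delta\mbs{\beta}\times\mbs{d}$, with $\delta\mbs{\beta}$ as defined in~\eqref{eq-ti-beta}. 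This identification is the key step and interprets $\delta\mbs{\beta}$ as the infinitesimal rotation vector that rotates $\mbs{d}$.

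From there the rest is mechanical. Differentiating $\delta\mbs{d}=\delta\mbs{\beta}\times\mbs{d}$ with respect to $s$ gives
\begin{equation}
  \delta\mbs{d}' \;=\; \delta\mbs{\beta}'\times\mbs{d} + \delta\mbs{\beta}\times \mbs{d}'\ ,
\end{equation}
and then $\delta|\mbs{d}'| = (\mbs{d}'\cdot\delta\mbs{d}')/|\mbs{d}'|$. The second term contributes nothing because $\mbs{d}'\cdot(\delta\mbs{\beta}\times\mbs{d}')$ is a scalar triple product with a repeated vector, and the first term is manipulated by cyclic permutation of the triple product, $\mbs{d}'\cdot(\delta\mbs{\beta}'\times\mbs{d})=\delta\mbs{\beta}'\cdot(\mbs{d}\times\mbs{d}')$. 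Dividing by $|\mbs{d}'|$ produces the claimed expression $\delta|\mbs{K}| = \mbs{d}\times(\mbs{d}'/|\mbs{d}'|)\cdot\delta\mbs{\beta}'$.

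The only point requiring care is the reading of $\nabla_{\mbs{d}'}\delta\mbs{r}$ in~\eqref{eq-ti-beta}, since $\delta\mbs{r}$ is a vector field along the curve rather than a vector field on $S^2$. I would comment briefly that whichever interpretation is taken (the full $\delta\mbs{r}'$ or its tangential projection $(\mbs{I}-\mbs{d}\otimes\mbs{d})\delta\mbs{r}'$), the component parallel to $\mbs{d}$ is annihilated by the subsequent cross product with $\mbs{d}$, so the two readings coincide in~$\delta\mbs{\beta}$ and the calculation above is unaffected. This clarification, rather than any algebraic difficulty, is the one subtle point in the proof.
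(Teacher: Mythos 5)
Your proof is correct and follows essentially the same route as the paper's: both hinge on identifying $\delta\mbs{d}=\delta\mbs{\beta}\times\mbs{d}$, deriving $\delta\mbs{d}'=\delta\mbs{\beta}'\times\mbs{d}+\delta\mbs{\beta}\times\mbs{d}'$, and finishing with $\delta|\mbs{d}'|=\mbs{d}'\cdot\delta\mbs{d}'/|\mbs{d}'|$ and the triple-product identities. The only cosmetic difference is that the paper obtains $\delta\mbs{d}'$ by differentiating the auxiliary curve $\mbs{d}_\iota=\exp[\iota\,\widehat{\delta\mbs{\beta}}]\mbs{d}$, whereas you differentiate the identity $\delta\mbs{d}=\delta\mbs{\beta}\times\mbs{d}$ in $s$ directly; both are valid and your closing remark on the reading of $\nabla_{\mbs{d}'}\delta\mbs{r}$ is apt.
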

\begin{proof} The variations of $\epsilon, |\mbs{K}|$ and $\tau$ are obtained by a systematic
application of the chain rule and the definition of the perturbed
configurations~\eqref{eq-ti-one-parameter}. For convenience, we write
\begin{equation}
  \delta \mbs{d} =
  \left.\pd{}{\iota}\right|_{\iota=0} \frac{ \mbs{r}_\iota'}{| \mbs{r}_\iota' |}
  =\frac{1}{|\mbs{r}'|} \left( \delta \mbs{r}' - (\mbs{d}\cdot \delta \mbs{r}') \mbs{d} \right)
  = \left(\frac{1}{|\mbs{r}'|} \mbs{d}\times \delta \mbs{r}'\right) \times \mbs{d}\
  = \left(\frac{1}{|\mbs{r}'|} \mbs{d}\times \nabla_{\mbs{d}'} \delta \mbs{r} \right)\times \mbs{d}
  \:,
  \label{eq-ti-beta-def}
\end{equation}
and we define the last parenthesis to be equal to $\delta \mbs{\beta}$, an arbitrary
vector field on the rod, orthogonal to $\mbs{d}$. The variation of $\epsilon$ then follows
trivially.

To calculate the variation of the bending strain~$|\mbs{K}|$, it is convenient to write a
one-parameter curve of directors in the form
\begin{equation}
  \mbs{d}_\iota = \exp[\iota\, \widehat{\delta\mbs{\beta}}] \mbs{d}\ ,
  \label{eq-ti-d-param}
\end{equation}
whose derivative with respect to the arc-length is
\begin{equation}
  \mbs{d}_\iota'
  =
  \skew\left[ \dexp[\iota\,\widehat{\delta \mbs{\beta}}] \iota \delta \mbs{\beta}' \right]
  \exp[\iota\, \widehat{\delta \mbs{\beta}}] \mbs{d} + \exp[\iota\, \widehat{\delta \mbs{\beta}}]
  \mbs{d}'\ .
  \label{eq-ti-proof-1}
\end{equation}
Thus, the variation of $\delta \mbs{d}'$ is
\begin{equation}
  \delta \mbs{d}'
  =
  \left.\pd{}{\iota}\right|_{\iota=0} \mbs{d}_\iota' = \delta \mbs{\beta}'\times \mbs{d}
  +
  \delta \mbs{\beta}\times \mbs{d}' ,
  \label{eq-ti-proof-2}
\end{equation}
which, together with
\begin{equation}
  \delta |\mbs{K}| = \delta |\mbs{d}'| = \frac{1}{|\mbs{d}'|} \mbs{d}'\cdot \delta \mbs{d}'\ ,
  \label{eq-ti-proof-3}
\end{equation}
yields the variation of the second strain. The variation of the torsional strain is trivial.
\end{proof}

\subsubsection{Quasistatic equilibrium equations}
The equilibrium equations of the rod are obtained from the
Euler-Lagrange equations of the potential energy.

\begin{theorem}
  The equilibrium equations for a transversely isotropic Kirchhoff rod
  are:
  \begin{equation}
    \begin{split}
      \left(
        \axial +  \frac{\mbs{d}}{|\mbs{r}'|}\times \nabla_{\mbs{d}'}  \bending
      \right)'
         + \bar{\mbs{n}} &= \mbs{0}\ ,\\
       \storsion' + \bartorsion &= 0\ ,
  \end{split}
   \label{eq-ti-equilibrium}
 \end{equation}
 where the stress resultants are defined as
 \begin{equation}
   \axial = \pd{\bar{U}}{\epsilon} \mbs{d}
   \ ,\qquad
   \bending = \pd{\bar{U}}{|\mbs{K}|} \mbs{d}\times \frac{\mbs{d}'}{|\mbs{d}'|}
   \ ,\qquad
   \storsion = \pd{\bar{U}}{\tau}\ ,
  \label{eq-ti-theo-1}
\end{equation}
and the natural boundary conditions at $s=L$ are
\begin{equation}
  \axial  +  \frac{\mbs{d}}{|\mbs{r}' |}\times \nabla_{\mbs{d}'}  \bending 
  = \tilde{\mbs{n}}
   \ ,\qquad
   \bending\times \frac{\mbs{d}}{|\mbs{r}'|} = \mbs{0}
   \, \qquad
    \storsion = \tildetorsion\ .
   \label{eq-ti-theo-2}
\end{equation}
\end{theorem}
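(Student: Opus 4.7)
The plan is to enforce $\delta V = 0$ using the chain rule on $\bar U$ and the strain variations of Theorem~\ref{thm-ti-variations}, then integrate by parts until $\delta\mbs{r}$ and $\delta\psi$ appear algebraically in the bulk and at $s=L$. Because $\delta\mbs{r}(0)=\mbs{0}$ and $\delta\psi(0)=0$, boundary contributions at $s=0$ drop out. The bulk integrals then yield the Euler--Lagrange equations~\eqref{eq-ti-equilibrium} by the fundamental lemma, while the remaining surface terms at $s=L$ yield the natural boundary conditions~\eqref{eq-ti-theo-2}.

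The axial and torsional contributions are routine. With the stress resultants~\eqref{eq-ti-theo-1}, I would write $\pd{\bar U}{\epsilon}\delta\epsilon = \axial\cdot\delta\mbs{r}'$ and $\pd{\bar U}{\tau}\delta\tau = \storsion\,\delta\psi'$, and integrate each of these once by parts. Combined with the load contributions $-\bar{\mbs{n}}\cdot\delta\mbs{r} - \bartorsion\,\delta\psi$ and the point loads at $s=L$, the coefficient of $\delta\psi$ already produces $\storsion' + \bartorsion = 0$ in the bulk together with $\storsion(L)=\tildetorsion$ at the free end, which is the second equation of the theorem.

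The bending term is the main obstacle, because $\delta |\mbs{K}|$ is expressed through $\delta\mbs{\beta}'$ and $\delta\mbs{\beta}$ itself depends linearly on $\nabla_{\mbs{d}'}\delta\mbs{r}$, so two integrations by parts are needed. A first one gives $-\int_0^L \bending'\cdot\delta\mbs{\beta}\,\mathrm{d}s + [\bending\cdot\delta\mbs{\beta}]_0^L$. Substituting $\delta\mbs{\beta} = |\mbs{r}'|^{-1}\mbs{d}\times\nabla_{\mbs{d}'}\delta\mbs{r}$ and cycling the scalar triple product turns the volumetric integrand into $\delta\mbs{r}'\cdot\bigl(\frac{\mbs{d}}{|\mbs{r}'|}\times\bending'\bigr)$, where the tangential projection $\nabla_{\mbs{d}'}\delta\mbs{r}$ can be replaced by $\delta\mbs{r}'$ because $\bending'\times\mbs{d}$ is orthogonal to $\mbs{d}$. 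The key geometric observation is then that $\mbs{d}\times\bending' = \mbs{d}\times\nabla_{\mbs{d}'}\bending$, since $\mbs{d}\times\mbs{d}=\mbs{0}$; this is exactly where the covariant derivative demanded by the statement emerges. A second integration by parts finally produces $\bigl(\frac{\mbs{d}}{|\mbs{r}'|}\times\nabla_{\mbs{d}'}\bending\bigr)'\cdot\delta\mbs{r}$ in the bulk together with a matching surface term at $s=L$.

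Collecting all bulk integrals and invoking the fundamental lemma for the independent fields $\delta\mbs{r}$ and $\delta\psi$ yields~\eqref{eq-ti-equilibrium}, while the endpoint terms reproduce~\eqref{eq-ti-theo-2}. The only subtle point worth emphasizing is the interpretation of $\bending(L)\cdot\delta\mbs{\beta}(L)=0$: since $\delta\mbs{r}'(L)$ is free at the unconstrained end, $\delta\mbs{\beta}(L)$ ranges over all of $T_{\mbs{d}(L)}S^2$; combined with the fact that $\bending$ itself lies in that tangent plane by~\eqref{eq-ti-theo-1}, this forces $\bending(L)=\mbs{0}$, which is the symmetric statement $\bending\times \mbs{d}/|\mbs{r}'| = \mbs{0}$ appearing in the theorem.
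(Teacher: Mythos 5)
Your proposal is correct and follows essentially the same route as the paper: compute $\delta V$ with the strain variations of Theorem~\ref{thm-ti-variations}, integrate by parts once to isolate $\delta\mbs{\beta}$, substitute $\delta\mbs{\beta}=\tfrac{1}{|\mbs{r}'|}\mbs{d}\times\nabla_{\mbs{d}'}\delta\mbs{r}$, and integrate by parts a second time, with the covariant derivative emerging from the orthogonality of $\delta\mbs{\beta}$ (equivalently $\mbs{d}\times\bending'=\mbs{d}\times\nabla_{\mbs{d}'}\bending$) and the fundamental lemma delivering the bulk equations and the $s=L$ conditions. Your added remarks on why the moment boundary condition is equivalent to $\bending(L)=\mbs{0}$ are consistent with the paper.
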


\begin{proof}
The theorem follows from a systematic calculation of $\delta V$, the
variation of the potential energy, which is developed next:
\begin{equation}
  \begin{split}
    \delta V
    &=
    \int_0^L
    \left[
      \pd{\bar{U}}{\epsilon}\;\delta\epsilon +
      \pd{\bar{U}}{|\mbs{K}|}\;\delta |\mbs{K}|  +
      \pd{\bar{U}}{\tau}\;\delta\tau  -
      \bar{\mbs{n}}\cdot \delta \mbs{r} -
      \bartorsion \delta\psi
    \right]
    \,\mathrm{d} s
    -
    \tilde{\mbs{n}}\cdot \delta\mbs{r}(L)
    -
    \tildetorsion \delta\psi(L)\\
    &=
    \int_0^L
    \left[
      \pd{\bar{U}}{\epsilon}\; \mbs{d}\cdot \delta \mbs{r}' +
      \pd{\bar{U}}{|\mbs{K}|}\;\mbs{d}\times \frac{\mbs{d}'}{|\mbs{d}'|}\cdot \delta \mbs{\beta}'\ +
      \pd{\bar{U}}{\tau}\;\delta\psi'  -
      \bar{\mbs{n}} \cdot \delta \mbs{r}-
      \bartorsion \delta\psi
    \right]
    \,\mathrm{d} s
    -
    \tilde{\mbs{n}}\cdot \delta\mbs{r}(L)
    -
    \tildetorsion \delta\psi(L) .\\
  \end{split}
   \label{eq-ti-proof-4}
\end{equation}
Defining the axial force, bending, and torsion moment, respectively, as
in Eq.~\eqref{eq-ti-theo-1} and integrating by parts, it follows that
\begin{equation}
  \begin{split}
    \delta V =&
    \int_0^L
    \left[
      - (\axial' + \bar{\mbs{n}})\cdot \delta \mbs{r}
      - \nabla_{\mbs{d}'}\bending\cdot \delta \mbs{\beta}
      - (\storsion'+ \bartorsion) \delta\psi
    \right] \,\mathrm{d} s
    \\
    &+
    (\axial(L) - \tilde{\mbs{n}})\cdot\delta \mbs{r}(L)
    +
    \bending(L)\times \frac{\mbs{d}(L)}{|\mbs{r}'(L)|}\cdot \delta \mbs{r}'(L)
        +
    (\storsion(L) - \tildetorsion) \delta\psi(L)
    \:.
  \end{split}
   \label{eq-ti-proof-6}
\end{equation}
Finally, expanding $\delta \mbs{\beta}$ as in Eq.~\eqref{eq-ti-beta}, and
integrating by parts a second time we arrive at the final expression for
the variation:
\begin{equation}
  \begin{split}
    \delta V
    &=
    \int_0^L
    \left[
      -
      \left(
        (\axial + \frac{\mbs{d}}{|\mbs{r}'|}\times \nabla_{\mbs{d}'}\bending)' +
        \bar{\mbs{n}}
      \right) \cdot \delta \mbs{r}
      -
      (\storsion' + \bartorsion) \delta\psi
    \right] \,\mathrm{d} s
    \\
    &\qquad
    +
    (\axial(L) + \frac{\mbs{d}(L)}{|\mbs{r}'(L)|}\times \nabla_{\mbs{d}'}\bending (L) - \tilde{\mbs{n}})
    \cdot\delta \mbs{r}(L)
    \\
    &\qquad
    +
    \bending(L)\times \frac{\mbs{d}(L)}{|\mbs{r}'(L)|} \cdot \delta \mbs{r}'(L)
    \\
    &\qquad
    +
    (\storsion(L) - \tildetorsion) \delta\psi(L)\ .
  \end{split}
   \label{eq-ti-proof-7}
\end{equation}
Using the condition $\delta V=0$ and the arbitrariness of the variations, the
theorem is proved.
\end{proof}

\subsection{Difficulties with dynamic problems}
The variational principle of transversely isotropic Kirchhoff rods can be recovered from the general
case when the rotation field is chosen to be the composition of Bishop's frame and a
rotation of angle $\psi$ around the tangent vector $\mbs{d}$. As a result of the isotropy, Bishop's
frame does not play any role and its calculation can be avoided, leaving a model with only four
unknown fields, namely, the three components of $\mbs{r}$ and $\psi $.

The simplicity of the four-field formulation makes it very appealing for its use in numerical
computations and it has been exploited in many works. Its extension to dynamical problems has also
been considered, but not always in a fully correct fashion.

In a transversely isotropic rod the strain measure $\mbs{\omega}$
can be expressed as
\begin{equation}
  \mbs{\omega}
  =
  \mbs{d}\times \mbs{d}' + \psi'\, \mbs{d}
  \label{eq-ti-omega}
\end{equation}
and $\mbs{\Omega}= \mbs{\Lambda}^T \mbs{\omega}$, when $\mbs{\Lambda}$
is the composite rotation based on Bishop's frame. In view
of the parallelism between the strain $\mbs{\omega}$ and the
spatial angular velocity $\mbs{w}$, it is tempting to assume that
the latter must be of the form
\begin{equation}
  \mbs{w}
  =
  \mbs{d}\times \dot{\mbs{d}}  + \dot{\phi}\, \mbs{d}\ ,
  \label{eq-ti-w}
\end{equation}
where the spin velocity satisfies $\dot{\phi} = \dot{\psi}$. 
This formula for the angular velocity has been used in the past \cite{Greco:2013co}, but it is
not true, in general, as discussed in Section~\ref{sec-composite}.

\subsection{Variational formulation of dynamic problems}
\label{subs-ti-dynamic}
It remains to find a variational formulation for dynamic problems of
transversely isotropic Kirchhoff rods,
one that take advantage of the four-field formulation. As discussed before, this
is not straightforward because there is no simple way to express the kinetic
energy in terms of $\mbs{r}$ and $\psi $. To formulate the full kinetic energy it
is thus necessary to introduce the rotation field again in the variational principle.

One might keep the advantages of the four-field formulation by neglecting
the kinetic energy associated to the spin around $\mbs{d}=\mbs{r}'/|\mbs{r}'|$, \ie, assuming its
density to be of the form
\begin{equation}
  k = \frac{1}{2} A_\rho |\dot{\mbs{r}}|^2 + \frac{1}{2} \mbs{w}_\perp\cdot \mbs{i}_\perp
  \mbs{w}_\perp\ ,
  \label{eq-ti-kinetic}
\end{equation}
where $\mbs{w}_\perp = \mbs{d}\times \dot{\mbs{d}}$. In view of the isotropy
property of the beam, the latter can be written simply as
\begin{equation}
  k = \frac{1}{2} A_\rho |\dot{\mbs{r}}|^2 + \frac{1}{2} {i}_\perp |\dot{\mbs{d}}|^2\ .
  \label{eq-ti-kinetic2}
\end{equation}

This approximation on the form of the kinetic energy precludes the use of the model for problems in
which a rod spins around it centerline although it might be an admissible assumption for many
practical problems (cable deployment, DNA modeling, etc.)
and has often been neglected tacitly (\eg, \cite{Valverde:2006cj}).
In any case, since the field $\psi$ will appear in the governing equations of the model,
but not its rate nor acceleration, the latter will admit torsional waves of infinite speed. In
order to recover the structure of an hyperbolic initial boundary-value problem one might choose to
\emph{regularize} the kinetic energy~\eqref{eq-ti-kinetic2} by defining it to be
\begin{equation}
  k = \frac{1}{2} A_\rho |\dot{\mbs{r}}|^2 + \frac{1}{2} {i}_\perp |\dot{\mbs{d}}|^2
  +
  \frac{1}{2} i_\parallel \dot{\psi}^2\ ,
  \label{eq-ti-kinetic-reg}
\end{equation}
recognizing from the outset that this is not the exact expression for the kinetic energy but
merely a convenient approximation.  We present next the variational formulation and governing
equations of the transversely isotropic Kirchhoff rod with the approximated kinetic
energy~\eqref{eq-ti-kinetic-reg}. As in the general case, we use
Hamilton's principle of stationary action to derive the governing equations for the rod under
consideration.

In contrast with the model presented in Section~\ref{sec-canonical},
the variational principle for the model under consideration does not include any constraint.
The action functional is of the form
\begin{equation}
  S = \int_0^T (T - V) \,\mathrm{d} t\ ,
  \label{eq-ti-S}
\end{equation}
where the kinetic energy~$T$ is the integral over the rod
of the density~$k$ as given by
Eq.~\eqref{eq-ti-kinetic-reg}
and the potential energy~$V$ is defined in Eq.~\eqref{eq-ti-v}.
The Euler-Lagrange equations of the action~\eqref{eq-ti-S} are
obtained using the same type of calculations as in previous
models. We present without proof the main result:
\begin{theorem}
  The dynamic equilibrium equations of a transversely isotropic
  Kirchhoff rod are:
  \begin{equation}
    \begin{split}
      \left(
        \axial +  \frac{\mbs{d}}{|\mbs{r}'|}\times \nabla_{\mbs{d}'}  \bending
      \right)'
      + \bar{\mbs{n}} &=
      \dot{\mbs{p}}
      + \left( \frac{\mbs{d}}{|\mbs{r}'|}\times \nabla_{\dot{\mbs{d}}} \mbs{\pi}_\perp \right)'
      \ , \\
      \storsion' + \bartorsion &= \dot{\pi}_\parallel\ ,
    \end{split}
    \label{eq-ti-dyn-1}
  \end{equation}
  where the stress resultants are defined in Eq.~\eqref{eq-ti-theo-1}, and the natural boundary
  conditions at $s=L$ are
  \begin{equation}
    \axial +  \frac{\mbs{d}}{|\mbs{r}'|}\times
    \left( \nabla_{\mbs{d}'}  \bending - \nabla_{\dot{\mbs{d}}} \mbs{\pi}_\perp \right)
    = \tilde{\mbs{n}}
   \ ,\qquad
   \bending\times \frac{\mbs{d}}{|\mbs{r}'|} = \mbs{0}
   \, \qquad
    \storsion = \tildetorsion\ .
    \label{eq-ti-dyn-2}
  \end{equation}
\end{theorem}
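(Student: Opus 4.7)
The plan is to follow the template of the quasistatic theorem by computing $\delta S = \int_0^T (\delta T - \delta V)\,\mathrm{d} t$, reusing the calculation of $\delta V$ from the previous theorem and focusing all effort on $\delta T$. Since the potential-energy variation already delivers the quasistatic bulk expressions and the natural boundary conditions, the only new terms to produce are the inertial ones on the right-hand side of Eq.~\eqref{eq-ti-dyn-1}, together with a correction to the natural boundary condition at $s=L$ appearing in Eq.~\eqref{eq-ti-dyn-2}.

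First I would handle the straightforward pieces of
\[
\delta T = \int_0^L \Bigl[ A_\rho\,\dot{\mbs{r}}\cdot \delta\dot{\mbs{r}} + i_\perp\,\dot{\mbs{d}}\cdot \delta\dot{\mbs{d}} + i_\parallel\,\dot\psi\,\delta\dot\psi \Bigr]\,\mathrm{d} s.
\]
Integration by parts in time of the translational and spin contributions gives $-\dot{\mbs{p}}\cdot\delta\mbs{r}$ and $-\dot\pi_\parallel\,\delta\psi$, with the endpoint terms at $t=0,T$ vanishing under the usual conditions $\delta\mbs{r}(\cdot,0)=\delta\mbs{r}(\cdot,T)=\mbs{0}$ and $\delta\psi(\cdot,0)=\delta\psi(\cdot,T)=0$. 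For the rotational-inertia term I would mirror the $\delta\mbs{d}'$ calculation in the proof of Theorem~\ref{thm-ti-variations}: writing the perturbed director as $\mbs{d}_\iota = \exp[\iota\,\widehat{\delta\mbs{\beta}}]\mbs{d}$ and differentiating $\dot{\mbs{d}}_\iota$ at $\iota=0$ yields $\delta\dot{\mbs{d}} = \delta\dot{\mbs{\beta}}\times\mbs{d} + \delta\mbs{\beta}\times\dot{\mbs{d}}$. Dotting with $i_\perp\,\dot{\mbs{d}}$, the second summand vanishes as a triple product with a repeated vector, and the first collapses to $\mbs{\pi}_\perp\cdot\delta\dot{\mbs{\beta}}$ via the identity $\mbs{\pi}_\perp = i_\perp\,\mbs{d}\times\dot{\mbs{d}}$.

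A single integration by parts in $t$ then converts $\mbs{\pi}_\perp\cdot\delta\dot{\mbs{\beta}}$ into $-\nabla_{\dot{\mbs{d}}}\mbs{\pi}_\perp\cdot\delta\mbs{\beta}$, where the covariant derivative replaces the plain time derivative because $\delta\mbs{\beta}\perp\mbs{d}$; the endpoint contributions again vanish since $\delta\mbs{r}(\cdot,0)=\delta\mbs{r}(\cdot,T)=\mbs{0}$ forces $\delta\mbs{\beta}$ to vanish at $t=0,T$ as well. Substituting $\delta\mbs{\beta} = \frac{1}{|\mbs{r}'|}\mbs{d}\times\nabla_{\mbs{d}'}\delta\mbs{r}$, cycling the resulting triple product, and integrating by parts once more in $s$ reproduces exactly the bending structure of the quasistatic proof but with $\nabla_{\mbs{d}'}\bending$ replaced by $-\nabla_{\dot{\mbs{d}}}\mbs{\pi}_\perp$. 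Collecting this with the quasistatic variation of $V$ and the $\dot{\mbs{p}}$, $\dot\pi_\parallel$ terms delivers Eqs.~\eqref{eq-ti-dyn-1}; the endpoint contributions at $s=L$ account for the extra $-\frac{\mbs{d}}{|\mbs{r}'|}\times\nabla_{\dot{\mbs{d}}}\mbs{\pi}_\perp$ that distinguishes Eq.~\eqref{eq-ti-dyn-2} from its quasistatic counterpart, while the remaining boundary conditions $\bending\times\frac{\mbs{d}}{|\mbs{r}'|}=\mbs{0}$ and $\storsion=\tildetorsion$ are inherited from $\delta V$ because the kinetic boundary term involves only $\delta\mbs{r}(L)$, not $\delta\mbs{r}'(L)$ or $\delta\psi(L)$. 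The main obstacle, rather than any specific calculation, is recognizing that $\delta\dot{\mbs{\beta}}$ plays for time derivatives exactly the role that $\delta\mbs{\beta}'$ plays for arc-length derivatives, so that the bending integration-by-parts machinery transfers verbatim to the rotational-inertia contribution.
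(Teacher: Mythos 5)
Your proposal is correct and follows exactly the route the paper indicates (the authors state the result ``without proof,'' noting it follows from the same calculations as the quasistatic and general constrained cases): you reuse $\delta V$ from the quasistatic theorem and treat the rotational-inertia term via $\delta\dot{\mbs{d}}=\delta\dot{\mbs{\beta}}\times\mbs{d}+\delta\mbs{\beta}\times\dot{\mbs{d}}$, which is the time-derivative analogue of Eq.~\eqref{eq-ti-proof-2}, yielding the $\bigl(\tfrac{\mbs{d}}{|\mbs{r}'|}\times\nabla_{\dot{\mbs{d}}}\mbs{\pi}_\perp\bigr)'$ term and the corrected boundary condition with the right signs. The key observation that $\delta\dot{\mbs{\beta}}$ plays the role of $\delta\mbs{\beta}'$ so the bending integration-by-parts machinery transfers verbatim is precisely what makes the omitted proof routine.
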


We conclude this section reflecting upon whether the use of dynamic formulations for transversely
isotropic Kirchhoff rods is a sensible choice or not. As discussed at length, this type of models
always entails approximations on the kinetic energy that can only be justified by
arguing that the torsional effects play a small role in the model. In these cases, however, one wonders if
it would not be better to employ, from the outset, a rod model that completely eliminates the torsion
from the equations \cite{RoUrreCy:2014}.
\section{A joint variational principle for general and transversely isotropic Kirchhoff rods}
\label{sec-mixed}
Sections~\ref{sec-canonical} and \ref{sec-ti} discuss, respectively, variational principles for
general and transversely isotropic Kirchhoff rods. The configuration space defined for the first
type of rods is too rich, and Lagrange multipliers have to be used in the variational principle to
constrain the rod's kinematics. This is in contrast with the configuration space of the second kind,
which fits exactly the kinematics of the rod and makes unnecessary to employ constraints in the
corresponding variational principle. On the downside, this latter formulation is only valid for
transversely isotropic rods.

In this section we propose a third variational principle that is valid for rods of
general cross sections, but simplifies when the model is transversely isotropic. As with the
principles of Section~\ref{sec-ti}, the new functional is strictly exact only for
quasistatic problems. Its extension to dynamic problems requires that the part of the
kinetic energy associated with the rotation of the cross section about its director be
neglected. 

The idea of the new developments is to use, as in previous sections, Hamilton's principle of
stationary action to obtain the equations of motion of Kirchhoff rods, but in a way that the
solution of certain fields can be decoupled from the rest of unknowns \emph{when the rod is
transversely isotropic}.  In the general case, the variational principle is completely equivalent to
the one studied in Section~\ref{sec-canonical} if the problem is quasistatic or the contribution
from the inertia mentioned above is ignored.

The new principle is defined in terms of the unknown fields $\mbs{r},\mbs{\Lambda},\psi: [0,L]\to
\reals^3\times SO(3)\times \reals$ and the Lagrange multipliers $\mbs{\eta},\mu:[0,L]\to
\reals^2\times\reals$. The unknown fields refer, as in previous sections, to the position of the
curve of centroids, the section orientation, and the torsional angle. Based on these, the potential
energy of the rod has the usual form
\begin{equation}
  V
  = \int_0^{L} \tilde{U}(\epsilon, \mbs{K}, \tau; s) \,\mathrm{d} s
  - \int_0^{L} (\bar{\mbs{n}}\cdot r +\bartorsion\, \psi )   \,\mathrm{d} s
  - \tilde{\mbs{n}}\cdot \mbs{r}(L)
  - \tildetorsion \psi(L)
  \ ,
  \label{eq-mix-v}
\end{equation}
but now the strain measures are defined as
\begin{equation}
  \epsilon := \mbs{r}'\cdot \mbs{d} - 1
  \ ,\qquad
  \mbs{K} := \mbs{\Lambda}^T \mbs{\kappa}
  \ ,\qquad
  \tau := \psi'\ ,
  \label{eq-mix-strains}
\end{equation}
with
\begin{equation}
  \mbs{d} := \frac{\mbs{r}'}{|\mbs{r}'|}
  \ ,
  \qquad
  \mbs{\kappa} := \mbs{d}\times \mbs{d}'\ .
  \label{eq-mix-strains2}
\end{equation}
We note that the strain measures are completely identical to the ones defined in
Eqs.~(\ref{eq-epsilon})-(\ref{eq-torsion-strain}) and Eq.~(\ref{eq-ti-strains}) when the Kirchhoff
constraint is verified, but the arguments are different. In contrast with the model of
Section~\ref{sec-ti}, the full bending strain enters now the stored energy function, but the role
played by the rotation $\mbs{\Lambda}$ in the definition of $\epsilon$ and $\mbs{K}$ is different
than in the general model of Section~\ref{sec-canonical}.

Next, the kinetic energy is defined to be
\begin{equation}
  T =
  \int_0^{L}
  \left(
    \frac{1}{2} |\dot{\mbs{r}}|^2
    +
    \frac{1}{2} \mbs{w}_\perp \cdot \mbs{i}_\perp \mbs{w}_\perp
  \right)
  \,\mathrm{d} s \ ,
  \label{eq-mix-kinetic}
\end{equation}
with $\mbs{w}_\perp = \mbs{d}\times\dot{\mbs{d}}$.
We recall that the rotational part of this energy can be written with convected objects
using the relation:
\begin{equation}
  \frac{1}{2} \mbs{w}_\perp \cdot \mbs{i}_\perp \mbs{w}_\perp
  =
  \frac{1}{2} \mbs{W}_\perp \cdot \mbs{I}_\perp \mbs{W}_\perp\ .
  \label{eq-mix-kinetic-alternative}
\end{equation}
Finally, to impose Kirchhoff's constraint and that $\psi$ be the torsional angle,
we define the action of the model as the constrained functional:
\begin{equation}
  S := \int_0^T
  \left[
    T - V
    -
    \int_0^{L}
    \left(
      \mu\cdot( \psi'- \mbs{\Omega}\cdot \mbs{E}_3)
      +
      \mbs{\eta}\cdot (\mbs{\Lambda}^T \mbs{r}'\times \mbs{E}_3)
    \right)
    \,\mathrm{d} s
  \right]
  \,\mathrm{d} t\ .
  \label{eq-mixed-action}
\end{equation}
Barring the simplifications in the definition of the kinetic energy,
the actions~\eqref{eq-mixed-action} and~(\ref{eq-kirchhoff-action})
are clearly equivalent, the only difference being the addition
of a new variable $\psi$ defined by the relation $\psi'= \mbs{\Omega}\cdot \mbs{E}_3$.

\subsection{Strain and velocity variations}
The definitions of the strains~(\ref{eq-mix-strains}) and the kinetic energy density in
Eq.~(\ref{eq-mix-kinetic}) are different from the ones of Sections~\ref{sec-canonical}
and~\ref{sec-ti}, although identical when the Kirchhoff constraint is verified. Their variations
will be used in Section~\ref{subs-mix-eq} and need to be calculated anew.

To this end, let us define the configuration space for this principle considering,
for simplicity, once again the case of a rod clamped at the end $s=0$:
\begin{equation}
  Q :=
  \left\{
    (\mbs{r},\mbs{\Lambda},\psi, \mbs{\eta}, \mu): [0,L]\to
    \reals^3 \times SO(3) \times \reals \times \reals^2 \times \reals\ ,
    \
    \mbs{r}(0) = \bar{\mbs{r}},\ 
    \mbs{\Lambda}(0) = \bar{\mbs{\Lambda}},\ 
    \psi(0) = 0
  \right\}\ .
  \label{eq-mix-q}
\end{equation}
Proceeding as in Sections~\ref{sec-canonical} and \ref{sec-ti},
we define the one-parameter curve of configurations
\begin{equation}
  (\mbs{r}_\iota, \mbs{\Lambda}_\iota, \psi_\iota, \mbs{\eta}_\iota, \mu_\iota)
  =
  \left(
    \mbs{r} + \iota\delta\mbs{r},
    \exp[\iota \hat{\delta\mbs{\theta}}]\mbs{\Lambda},
    \psi + \iota\delta\psi ,
    \mbs{\eta} + \iota \delta\mbs{\eta},
    \mu + \iota \delta\mu
  \right)
  \label{eq-mix-perturbed}
\end{equation}
where $\iota\in\reals$, and $\delta \mbs{r},\delta \mbs{\theta}, \delta \psi,
\delta \mbs{\eta}, \delta \mu$ being arbitrary variations with the properties
\begin{equation}
  \delta \mbs{r}(0) = \mbs{0}\ ,\quad
  \delta \mbs{\theta}(0) = \mbs{0},\quad
  \delta \psi(0) = 0\ ,
  \label{eq-mix-var-bc}
\end{equation}
vanishing at $t=0$. Then, a systematic application of the concept of linearization yields
the following result, which we present without proof since it is very
similar to Theorem~\ref{thm-ti-variations}.
\begin{theorem}
  The variations of the strains~(\ref{eq-mix-strains}) are:
  \begin{equation}
   \begin{split}
     \delta\varepsilon &= \mbs{d}\cdot \delta \mbs{r}'   \ , \\
     \delta \mbs{K} &= \mbs{\Lambda}^T\left( \mbs{\kappa}\times \delta \mbs{\beta} +
       \delta \mbs{\theta} \times \mbs{\kappa} + \nabla_{\mbs{d}'} \delta \mbs{\beta}
       \right) \ , \\
     \delta \tau &= \delta\psi'\ ,\\
   \end{split}
   \label{eq-mix-strain-vars}
 \end{equation}
 with $\delta \mbs{\beta}$ defined as in Eq.~(\ref{eq-ti-beta}). Similarly,
 the variation of the convected angular velocity is given by
\begin{equation}
  \delta \mbs{W}_\perp
  =
  \mbs{\Lambda}^T\left( \mbs{w}_\perp\times \delta \mbs{\beta} +
       \delta \mbs{\theta} \times \mbs{w}_\perp + \nabla_{\dot{\mbs{d}}} \delta \mbs{\beta}
     \right) .
     \label{eq-mix-vel-vars}
\end{equation}
\end{theorem}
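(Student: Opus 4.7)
The strategy is to imitate, piece by piece, the variational calculus already carried out in Theorem~\ref{thm-ti-variations}, only now with the rotation $\mbs{\Lambda}$ and the multipliers $\mbs{\eta},\mu$ as additional independent fields. From the perturbed configuration~\eqref{eq-mix-perturbed} I would recycle without change the identities $\delta\mbs{d}=\delta\mbs{\beta}\times\mbs{d}$ and $\delta\mbs{d}'=\delta\mbs{\beta}'\times\mbs{d}+\delta\mbs{\beta}\times\mbs{d}'$ established in Eqs.~\eqref{eq-ti-beta-def}--\eqref{eq-ti-proof-2}, since $\mbs{d}$ depends only on $\mbs{r}$ in this formulation as well. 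The new ingredient is $\delta\mbs{\Lambda}=\widehat{\delta\mbs{\theta}}\mbs{\Lambda}$, whence $\delta\mbs{\Lambda}^T\mbs{a}=-\mbs{\Lambda}^T(\delta\mbs{\theta}\times\mbs{a})$ for any vector~$\mbs{a}$. The variations of $\psi$, $\mbs{\eta}$ and $\mu$ enter additively and are trivial.

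The first and third identities in~\eqref{eq-mix-strain-vars} are essentially immediate: since $\epsilon=\mbs{r}'\cdot\mbs{d}-1=|\mbs{r}'|-1$, the chain rule yields $\delta\epsilon=\mbs{d}\cdot\delta\mbs{r}'$, and the relation $\tau=\psi'$ gives $\delta\tau=\delta\psi'$ without further computation.

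The bending variation is the main algebraic step. Writing $\delta\mbs{K}=\delta\mbs{\Lambda}^T\mbs{\kappa}+\mbs{\Lambda}^T\delta\mbs{\kappa}$, the first summand contributes the $\delta\mbs{\theta}$-piece of the claim directly. For the second summand I would expand $\delta\mbs{\kappa}=\delta\mbs{d}\times\mbs{d}'+\mbs{d}\times\delta\mbs{d}'$, substitute the formulas for $\delta\mbs{d}$ and $\delta\mbs{d}'$, and apply the vector triple product identity to each piece. The simplification is driven by three orthogonality relations: $\mbs{d}\cdot\mbs{d}'=0$, $\mbs{d}\cdot\delta\mbs{\beta}=0$, and the consequent $\mbs{d}\cdot\delta\mbs{\beta}'=-\mbs{d}'\cdot\delta\mbs{\beta}$. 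After collecting terms, the expansion reorganizes into the cross product $\mbs{\kappa}\times\delta\mbs{\beta}$ together with the covariant derivative $\nabla_{\mbs{d}'}\delta\mbs{\beta}=(\mbs{I}-\mbs{d}\otimes\mbs{d})\delta\mbs{\beta}'$, which accounts exactly for the splitting $\reals^3\cong T_{\mbs{d}}S^2\oplus\mathrm{span}(\mbs{d})$ discussed in Section~\ref{subs-composite}.

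The derivation of $\delta\mbs{W}_\perp$ is structurally identical to that of $\delta\mbs{K}$. Since $\mbs{w}_\perp=\mbs{d}\times\dot{\mbs{d}}$, the entire argument above carries over verbatim with the arc-length derivative replaced by the time derivative, so that $\dot{\mbs{d}}$ assumes the role of $\mbs{d}'$ and $\mbs{w}_\perp$ that of $\mbs{\kappa}$, and the covariant derivative on $S^2$ is now taken in the direction $\dot{\mbs{d}}$. The main obstacle is not the proof idea itself but the careful bookkeeping of signs: every occurrence of a cross product must be kept in the orientation consistent with the convention $\delta\mbs{\Lambda}=\widehat{\delta\mbs{\theta}}\mbs{\Lambda}$, and each tangential-plus-normal decomposition of an $\reals^3$-valued object has to be identified with the covariant-derivative notation so that the final formulas display the intrinsic geometric meaning rather than their brute Cartesian components.
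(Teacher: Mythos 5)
Your overall strategy is exactly the one the paper intends --- the theorem is stated without proof precisely because the computation is the argument of Theorem~\ref{thm-ti-variations} augmented by the independent variation $\delta\mbs{\Lambda}=\widehat{\delta\mbs{\theta}}\mbs{\Lambda}$ --- and the identities you propose to recycle ($\delta\mbs{d}=\delta\mbs{\beta}\times\mbs{d}$, $\delta\mbs{d}'=\delta\mbs{\beta}'\times\mbs{d}+\delta\mbs{\beta}\times\mbs{d}'$, $\delta\mbs{\Lambda}^T\mbs{a}=-\mbs{\Lambda}^T(\delta\mbs{\theta}\times\mbs{a})$) are all correct. The gap is that the sign bookkeeping you yourself flag as the main risk is not carried through: the two assertions that the first summand ``contributes the $\delta\mbs{\theta}$-piece of the claim directly'' and that the expansion ``reorganizes into the cross product $\mbs{\kappa}\times\delta\mbs{\beta}$'' both contradict your own intermediate formulas. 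Indeed $\delta\mbs{\Lambda}^T\mbs{\kappa}=-\mbs{\Lambda}^T(\delta\mbs{\theta}\times\mbs{\kappa})=\mbs{\Lambda}^T(\mbs{\kappa}\times\delta\mbs{\theta})$, the \emph{negative} of the stated $\delta\mbs{\theta}$-term, and the expansion of $\delta\mbs{\kappa}$ gives, using $\mbs{d}\cdot\mbs{d}'=0$ and $\mbs{d}\cdot\delta\mbs{\beta}=0$,
\begin{equation*}
  \delta\mbs{\kappa}
  =(\delta\mbs{\beta}\times\mbs{d})\times\mbs{d}'+\mbs{d}\times(\delta\mbs{\beta}'\times\mbs{d})+\mbs{d}\times(\delta\mbs{\beta}\times\mbs{d}')
  =(\delta\mbs{\beta}\cdot\mbs{d}')\,\mbs{d}+\nabla_{\mbs{d}'}\delta\mbs{\beta}
  =\delta\mbs{\beta}\times\mbs{\kappa}+\nabla_{\mbs{d}'}\delta\mbs{\beta}\ ,
\end{equation*}
again the negative of the stated $\mbs{\kappa}\times\delta\mbs{\beta}$. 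An honest execution of your plan therefore lands on $\delta\mbs{K}=\mbs{\Lambda}^T\left(\mbs{\kappa}\times\delta\mbs{\theta}+\delta\mbs{\beta}\times\mbs{\kappa}+\nabla_{\mbs{d}'}\delta\mbs{\beta}\right)$, i.e.\ the theorem with $\delta\mbs{\theta}$ and $\delta\mbs{\beta}$ interchanged in the two cross products (equivalently, with both of those terms negated), and likewise for $\delta\mbs{W}_\perp$.

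This discrepancy is not neutral downstream: the coefficient of $\delta\mbs{\theta}$ in Eq.~\eqref{eq-mix-el-proof2} is $+\mbs{\kappa}\times\bending+\cdots$, which is what one obtains from $-\pd{\tilde{U}}{\mbs{K}}\cdot\mbs{\Lambda}^T(\mbs{\kappa}\times\delta\mbs{\theta})$, i.e.\ from the sign-corrected variation, not from the formula as printed. So either the statement carries a typographical slip (the more plausible reading, given Eq.~\eqref{eq-mix-el-proof2}) and your write-up should derive and report the corrected signs explicitly, or you must exhibit a cancellation reconciling $-\mbs{\Lambda}^T(\delta\mbs{\theta}\times\mbs{\kappa})$ with $+\mbs{\Lambda}^T(\delta\mbs{\theta}\times\mbs{\kappa})$; no such cancellation exists. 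Everything else --- $\delta\epsilon=\mbs{d}\cdot\delta\mbs{r}'$ from $\epsilon=|\mbs{r}'|-1$, $\delta\tau=\delta\psi'$, the vanishing of $\mbs{d}\times(\delta\mbs{\beta}\times\mbs{d}')$, the identification of $(\mbs{I}-\mbs{d}\otimes\mbs{d})\delta\mbs{\beta}'$ with $\nabla_{\mbs{d}'}\delta\mbs{\beta}$, and the verbatim transfer to $\mbs{w}_\perp=\mbs{d}\times\dot{\mbs{d}}$ with $\nabla_{\dot{\mbs{d}}}$ --- is sound.
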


\subsection{Governing equations}
\label{subs-mix-eq}

The stationarity conditions of the action~(\ref{eq-mixed-action}) are completely
equivalent to Eqs.~(\ref{eq-el-k3}), as the following result shows:

\begin{theorem}
  The Euler-Lagrange equations corresponding to the action~(\ref{eq-mixed-action})
  are:
\begin{subequations}\label{eq-mix-el}
\begin{align}
   \left(
   \axial +
     \frac{\mbs{d}}{|\mbs{r}'|}\times
     \nabla_{\mbs{d}'} \bending
   \right)'
   +\bar{\mbs{n}}
   &=
   \dot{\mbs{p}}
   +
   \left(
     \frac{\mbs{d}}{|\mbs{r}'|}\times
       \nabla_{\dot{\mbs{d}}} \mbs{\pi}_\perp
   \right)'
     \ , \label{eq-mix-el1}
  \\
 (\bending + \storsion \mbs{d})' \cdot\mbs{d}  + \bartorsion
   &=
   \dot{\mbs{\pi}}_\perp \cdot \mbs{d}
    \ ,
     \label{eq-mix-el2}
\end{align}
\end{subequations}
where the stress resultants are defined in Eq.~\eqref{eq-ti-theo-1}.
The attendant natural boundary conditions at $s=L$ are:
\begin{equation}
  \begin{split}
    \axial +
     \frac{\mbs{d}}{|\mbs{r}'|}\times
     \left(
       \nabla_{\mbs{d}'}\bending -
       \nabla_{\dot{\mbs{d}}} {\mbs{\pi}_\perp}
     \right)
     &= \tilde{\mbs{n}}\ ,
     \\
     \frac{\mbs{d}}{|\mbs{r}'|}\times \bending &= 0\ ,
     \\
     \storsion &= \tildetorsion\ .
    \end{split}
    \label{eq-mix-el3}
  \end{equation}
  The natural boundary conditions at $s=L$ follow easily from
  the last integral in Eq.~\eqref{eq-mix-el-proof2} and the
  arbitrariness of the variations at the free end of the beam.
\end{theorem}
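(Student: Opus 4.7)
The plan is to compute the first variation of the action~(\ref{eq-mixed-action}) with respect to each of the independent fields $(\mbs{r},\mbs{\Lambda},\psi,\mbs{\eta},\mu)$, impose $\delta S=0$, and eliminate the Lagrange multipliers to recover the stated equations. First, I would vary $\delta\mu$ and $\delta\mbs{\eta}$, which immediately produces the two pointwise constraints $\psi'=\mbs{\Omega}\cdot\mbs{E}_3$ and $\mbs{\Lambda}^T\mbs{r}'\times\mbs{E}_3=\mbs{0}$. The second is the Kirchhoff hypothesis, yielding $\mbs{d}=\mbs{e}_3$ and therefore aligning $\mbs{\kappa}=\mbs{d}\times\mbs{d}'$ with the definition of the bending strain used in Section~\ref{sec-canonical}; the first identifies $\psi$ as the torsion angle so that $\tau=\psi'$ is consistent with the definition in~(\ref{eq-mix-strains}).

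Next, I would take the variation with respect to $\delta\mbs{r}$. Using the strain variations from the previous theorem together with the kinetic-energy variation, the contributions are (i) the axial term $\pd{\bar{U}}{\epsilon}\,\mbs{d}\cdot\delta\mbs{r}'$, (ii) the bending term $\Bending\cdot\delta\mbs{K}$ acting through $\delta\mbs{\beta}=|\mbs{r}'|^{-1}\mbs{d}\times\nabla_{\mbs{d}'}\delta\mbs{r}$, and (iii) the inertia term $\mbs{\pi}_\perp\cdot\delta\mbs{W}_\perp$ acting through the time counterpart of $\delta\mbs{\beta}$. Integrating by parts in $s$ (twice, for the bending contribution) and in $t$ (once, for the kinetic terms) yields the translational equation~(\ref{eq-mix-el1}) in the bulk together with boundary contributions proportional to $\delta\mbs{r}(L)$ and $\delta\mbs{r}'(L)$.

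Then, varying $\delta\mbs{\theta}$ affects only $\mbs{K}$ through $\delta\mbs{K}=\mbs{\Lambda}^T(\delta\mbs{\theta}\times\mbs{\kappa})$, the angular velocity $\delta\mbs{W}_\perp$, and the constraint term $\mbs{\Omega}\cdot\mbs{E}_3$ in $\mu$. Projecting the resulting pointwise equation onto the $\mbs{d}$ direction (the only nontrivial component, since $\mbs{\kappa}\perp\mbs{d}$) produces an algebraic relation that determines the tangential component of the multiplier $\mu$ in terms of the torsional moment and the inertia. Finally, varying $\delta\psi$ gives, after integration by parts in $s$, a differential equation whose coefficient of $\delta\psi$ is $\mu'+\bartorsion$; combining this with the projection from the $\delta\mbs{\theta}$ step produces~(\ref{eq-mix-el2}). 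In this way $\mu$ acquires the physical interpretation of the torsional stress resultant $\storsion$, and $\mbs{\eta}$ plays the role of the shear force, exactly as in Theorem~\ref{thm-k-g}.

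The main obstacle will be the careful bookkeeping of the covariant derivatives on $S^2$ that arise during the double integration by parts of $\delta\mbs{\beta}$: a first integration produces $\nabla_{\mbs{d}'}\bending$ (and analogously $\nabla_{\dot{\mbs{d}}}\mbs{\pi}_\perp$ for the inertia term), and only after crossing with $\mbs{d}/|\mbs{r}'|$ and integrating by parts once more does one obtain the divergence structure on the left-hand side of~(\ref{eq-mix-el1}). A subtlety is that the $\mbs{r}'$-dependence inside $\delta\mbs{\beta}$ means the bending variation contributes to both the $\delta\mbs{r}$ equation and the boundary term proportional to $\delta\mbs{r}'(L)$; keeping these separate is essential to obtain the second boundary condition in~(\ref{eq-mix-el3}). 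The remaining natural conditions then follow from the arbitrariness of $\delta\mbs{r}(L),\delta\mbs{r}'(L),\delta\psi(L)$ in the collected boundary integrand.
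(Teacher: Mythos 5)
Your overall strategy coincides with the paper's: vary all five fields, read off the pointwise constraints from $\delta\mu$ and $\delta\mbs{\eta}$, and obtain \eqref{eq-mix-el2} by projecting the $\delta\mbs{\theta}$ stationarity condition onto $\mbs{d}$ and combining it with the $\delta\psi$ equation. There is, however, a genuine gap in your treatment of the multiplier $\mbs{\eta}$. The constraint term $\mbs{\eta}\cdot(\mbs{\Lambda}^T\mbs{r}'\times\mbs{E}_3)$ depends on both $\mbs{r}'$ and $\mbs{\Lambda}$, so it contributes a shear force $\shear=(\mbs{\Lambda}\mbs{\eta})\times\mbs{e}_3$ to the $\delta\mbs{r}$ equation --- which therefore initially reads $(\axial+\shear)'+\dots$, not \eqref{eq-mix-el1} --- and a torque $\mbs{r}'\times\shear$ to the $\delta\mbs{\theta}$ equation. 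You list neither contribution. As a consequence, your claim that the $\mbs{d}$-projection is ``the only nontrivial component'' of the rotational balance is false: without the $\mbs{r}'\times\shear$ term, the component orthogonal to $\mbs{d}$ would collapse to $\mu\,\mbs{d}'=\mbs{0}$, forcing $\mu=0$ on any curved segment. In the paper that orthogonal component is precisely the relation $\mbs{r}'\times\shear+\mu\,\mbs{d}'=\mbs{0}$, which determines the shear in closed form, $\shear=\mu\,\mbs{d}\times\mbs{d}'/|\mbs{r}'|$, and it is only after substituting this expression back into the translational balance that the stated form of \eqref{eq-mix-el1} emerges. By skipping this elimination you never actually derive \eqref{eq-mix-el1}; you merely write down the answer.

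Two smaller points. First, the $\delta\psi$ equation is not ``$\mu'+\bartorsion=0$'': since $\tau=\psi'$ enters the stored energy, the coefficient of $\delta\psi$ after integration by parts also contains $\storsion'$, and it is exactly this term that supplies the $(\storsion\mbs{d})'\cdot\mbs{d}=\storsion'$ contribution appearing in \eqref{eq-mix-el2}. Second, $\mu$ is not identified with the torsional resultant $\storsion$ itself but only through a differential relation linking $\mu'$, $\storsion'$ and $\bartorsion$; the identification you suggest holds only in the absence of the distributed tangent moment.
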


\begin{proof}
  The variation of the action gives:
  \begin{equation}
   \begin{split}
     \delta S &=
     \int_0^T
     \left[
       \int_0^L
       \left(
         A_\rho \dot{\mbs{r}}\cdot\delta \dot{\mbs{r}}
         +
         \mbs{I}_\perp \mbs{W}_\perp \cdot \delta \mbs{W}_\perp
       \right)
       \,\mathrm{d} s
     \right.
     \\
     &\quad
     -
     \int_0^L
     \left(
       \pd{U}{\epsilon} \delta\epsilon
       +
       \pd{U}{\mbs{K}}\cdot \delta \mbs{K}
       +
       \pd{U}{\tau} \delta\tau
       -
       \bar{\mbs{n}}\cdot\delta \mbs{r}
       -
       \bartorsion \delta \psi
     \right)
     \,\mathrm{d} s
     + \tilde{\mbs{n}}\cdot \delta \mbs{r}(L)     
     + \tildetorsion \, \delta \psi(L)     
     \\
     &\quad
     -
     \int_0^L
     \left(
       \delta\mu(\psi'-\mbs{\Omega}\cdot \mbs{E}_3)
       +
       \mu(\delta \psi' - \delta \mbs{\Omega}\cdot \mbs{E}_3)
     \right)
     \,\mathrm{d} s \\
     &\quad
     \left.
       - \int_0^L
       \left(
         \delta \mbs{\eta}\cdot
         (\mbs{\Lambda}^T \mbs{r}'\times \mbs{E}_3)
         +
         \mbs{\eta}\cdot( \delta \mbs{\Lambda}^T \mbs{r}'\times \mbs{E}_3
         + \mbs{\Lambda}^T \delta \mbs{r}'\times \mbs{E}_3)
       \right)
       \,\mathrm{d} s
     \right]
     \,\mathrm{d} t\ .
   \end{split}
   \label{eq-mix-el-proof1}
 \end{equation}
 Replacing the formulas~(\ref{eq-mix-strain-vars})-(\ref{eq-mix-vel-vars})
 for the variations,
 using the property that the variations vanish at $t=0$,
 integrating by parts in time the first integral
 and in arc-length the remaning ones, we get, after some straightforward manipulations:
 \begin{equation}
   \begin{split}
     \delta S &=
     \int_0^T
     \int_0^L
     \left[
       - A_\rho \ddot{\mbs{r}}
       +
       \pd{}{s}
       \left(
         \axial
         +
         \frac{\mbs{d}}{|\mbs{r}'|} \times \bending'
         -
         \frac{\mbs{d}}{|\mbs{r}'|} \times \pd{}{t}(\mbs{i}_\perp \mbs{w}_\perp)
       \right)
       + \bar{\mbs{n}}
       \right]
       \cdot
       \delta \mbs{r}
       \,\mathrm{d} s
       \,\mathrm{d} t
     \\
     &\quad
     +
     \int_0^T
     \int_0^L
     \left[
       \mbs{\kappa}\times \bending + \mbs{r}'\times \shear + (\mu\,\mbs{e}_3)'-
       \mbs{w}_\perp\times(\mbs{i}_\perp \mbs{w}_\perp)
     \right]
     \cdot \delta \mbs{\theta}
     \,\mathrm{d} s
     \,\mathrm{d} t
     \\
     &\quad
     +
     \int_0^T
     \int_0^L
     \left[
       \left(
         \storsion' - \mu' + \bartorsion
       \right)
       \delta \psi
       +
       \left(
         \psi' - \mbs{\Omega}\cdot \mbs{E}_3
       \right) \cdot \delta \mbs{\mu}
       +
       \left( \mbs{\Lambda}^T \mbs{r}' \right)\times \mbs{E}_3 \cdot \delta \mbs{\eta}
     \right]
     \,\mathrm{d} s
     \,\mathrm{d} t
     \\
     &\quad
     +
     \int_0^T
     \left[
       \left[\left( 
         \frac{\mbs{d}}{|\mbs{r}'|} \times \pd{}{t}(\mbs{i}_\perp \mbs{w}_\perp)
         -
         \frac{\mbs{d}}{|\mbs{r}'|} \times \mbs{m}_\perp'
         -
         \mbs{n}_\parallel
       \right)\cdot\delta\mbs{r} \right]_0^L
       + \tilde{\mbs{n}}\cdot \delta \mbs{r}(L)
       + \tildetorsion\, \delta \psi(L)
     \right.\\
     &\quad\qquad\left.
       +
       \left[
         \frac{\mbs{d}}{|\mbs{r}'|}\times \mbs{m}_\perp \cdot \delta \mbs{r}'
       \right]_0^L
       -
       \left[
         \mu \mbs{e}_3 \cdot \delta \mbs{\theta}
       \right]_0^L
       +
       \left[
         (\mu-\storsion)\; \delta \psi
       \right]_0^L
     \right]
     \,\mathrm{d} t\ .
   \end{split}
   \label{eq-mix-el-proof2}
 \end{equation}
 Next, we show that the condition $\delta S=0$ gives the same
 differential equations of the general rod model described in Section~\ref{sec-canonical}.
 First, we note that the Lagrange multipliers impose strongly the conditions
 \begin{equation}
   \mu' = \storsion'  + \bartorsion
   \ ,\qquad
   \psi' = \mbs{\Omega}\cdot \mbs{E}_3
   \ ,\qquad
   (\mbs{\Lambda}^T \mbs{r}')\times \mbs{E}_3 = 0\ .
   \label{eq-mix-el-cons}
 \end{equation}
 The first of these conditions provides an interpretation for the Lagrange
 multiplier $\mu$. The second one identifies $\psi$ as the torsion angle. The third one is equivalent to
 the relation $\mbs{d}=\mbs{e}_3$. We will use these relations to simplify the notation
 in the expressions that follow.

 Since the variations $\delta \mbs{r}$ are arbitrary, the first integral in
 Eq.~(\ref{eq-mix-el-proof2})
 must vanish, and thus
 \begin{equation}
   (\axial + \shear)' +
   \left(
     \frac{\mbs{d}}{|\mbs{r}'|}\times
     \nabla_{\mbs{d}'} \bending
   \right)'
   +\bar{\mbs{n}}
   =
   \dot{\mbs{p}}
   +
   \left(
     \frac{\mbs{d}}{|\mbs{r}'|}\times
     \nabla_{\dot{\mbs{d}}} \mbs{\pi}_\perp
   \right)'
   \ ,
  \label{eq-mix-el-proof3}
\end{equation}
with $\shear= (\mbs{\Lambda} \mbs{\eta})\times
\mbs{e}_3$. Next, we consider the second integral
in Eq.~(\ref{eq-mix-el-proof2}), which must also vanish in view of the arbitrariness of
the variation $\delta \mbs{\theta}$. Hence
\begin{equation}
  \mbs{\kappa}\times \bending + \mbs{r}'\times \shear + (\mu\, \mbs{d})'
  = \mbs{w}_\perp\times \mbs{\pi}_\perp\:.
  \label{eq-mix-el-proof4}
\end{equation}
Deriving the relation $0=\bending \cdot \mbs{d}$
with respect to the arc-length we get that $\bending \cdot \mbs{d}' = - \bending'\cdot
\mbs{d}$, and it follows that
\begin{equation}
  \mbs{k}\times \bending = (\mbs{d}\times \mbs{d}')\times \bending = (\bending'\cdot
  \mbs{d})\mbs{d} .
  \label{eq-mix-el-proof6}
\end{equation}
Then, projecting both sides of Eq.~(\ref{eq-mix-el-proof4}) in the direction
of~$\mbs{d}$ and using Eq.~(\ref{eq-mix-el-proof6}) we obtain:
\begin{equation}
    \mbs{m}' \cdot\mbs{d} = \dot{\mbs{\pi}}_\perp\cdot \mbs{d}\ ,
  \label{eq-mix-el-proof7}
\end{equation}
with $\mbs{m}=\bending + \mu\, \mbs{d}$. Using Eq.~\eqref{eq-mix-el-cons},
a simple manipulation of Eq.~\eqref{eq-mix-el-proof7} yields
\begin{equation}
  (\bending + \storsion \mbs{d})'\cdot \mbs{d}
  + \bartorsion
  =
  \dot{\mbs{\pi}}_\perp\cdot \mbs{d}\ ,
  \label{eq-mix-el-proof71}
\end{equation}
which coincides with Eq.~\eqref{eq-mix-el2}. Projecting Eq.~(\ref{eq-mix-el-proof6}) onto the plane
orthogonal to $\mbs{d}$ we get
\begin{equation}
  \mbs{r}'\times \shear + \mu\, \mbs{d}' = 0\ ,
  \label{eq-mix-el-proof8}
\end{equation}
from where we can obtain the closed form expression for the symbol $\shear$:
\begin{equation}
  \shear
  = 
  \mu \frac{\mbs{d}}{|\mbs{r}'|}\times \mbs{d}'
  = 
  \frac{\mbs{d}}{|\mbs{r}'|}\times \nabla_{\mbs{d}'}(\mu\,\mbs{d})\ .
  \label{eq-mix-el-proof9}
\end{equation}
Replacing the value
of $\shear$ in Eq.~(\ref{eq-mix-el-proof3}) we arrive at the final expression:
\begin{equation}
   \left(
   \axial+
     \frac{\mbs{d}}{|\mbs{r}'|}\times
     \nabla_{\mbs{d}'} \bending
   \right)'
   +\bar{\mbs{n}}
   =
   \dot{\mbs{p}}
   +
   \left(
     \frac{\mbs{d}}{|\mbs{r}'|}\times
     \nabla_{\dot{\mbs{d}}} \mbs{\pi}_\perp
   \right)'
   \ .
  \label{eq-mix-el-proof10}
\end{equation}

\end{proof}

Interestingly, when formulating this variational principle for a transversely
isotropic rod several simplifications follow. First, by definition, the stored
energy function depends on $\mbs{K}$ only through its modulus, so the
potential energy is not a function of the section orientation. Then, again
by definition of isotropy, the inertia $\mbs{I}_\perp$ is a multiple of
the identity in $\mathrm{span}(\mbs{E}_1,\mbs{E}_2)$, and thus the kinetic energy does not
depend on the section orientation either. As a result, for the transversely
isotropic case, the fields $\mbs{r},\psi$ can be found by applying Hamilton's
principle to the action~(\ref{eq-mixed-action})
\emph{without solving for $\mbs{\Lambda},\mbs{\eta},\mu$.} In a way,
the problem decouples: the solution for $\mbs{r},\psi$ is independent
of the rest of unknowns, which can be calculated after the expressions for
the former are found. In many practical applications, the latter might even
be superfluous.

It bears emphasis that the action~(\ref{eq-mixed-action}) is not just the action of one of the two
models presented, respectively, in Sections~\ref{sec-canonical} and~\ref{sec-ti}, with some added
terms. The key for this new variational formulation is the reparametrization of the strain
measures and the approximation of the kinetic energy. In fact, it is easy to verify
that the functional of Section~\ref{sec-canonical} for general rods does not simplify
when the model is transversely isotropic.
%

\section{Summary}
\label{sec-summary}
This article presents a systematic analysis of Kirchhoff rods for general and transversely isotropic
cross sections. The framework is that of variational analysis that, elegantly, provides the
governing and boundary, conditions of each model, and helps to identify potential pitfalls in their
geometric setting.

The first result is a variational principle for general rods
that can deform axially, in bending, and in torsion, but not in
shear. This is a minor modification of the well-known
counterpart of inextensible Kirchhoff rods and provides a clear
illustration of the formulation of structural models by means
of constrained variational principles.

A very common motivation for mechanicians is the development
of a constraint-free theory for Kirchhoff rods. We discuss that
this is just possible, in the quasistatic case, for transversely
isotropic beams, and we provide the corresponding variational
principle. For dynamic problems we argue that it is not possible
to find an exact constraint-free theory, not even for transversely
isotropic rods. For the latter case we postulate \emph{approximate}
models that admit a variational principle free of any constraint.

The previous results provide two alternative variational
principles: one valid for any Kirchhoff rod -- but constrained -- and
a second, constraint-free, valid for transversely isotropic rods. With
a view in numerical methods, we propose a third route: a variational
principle that is valid for general Kirchhoff rods but simplifies
in the  transversely isotropic case, so that the constraints
can be solved pointwise in closed form, as long as the kinetic
energy is approximated.

\ifsage
\begin{funding}
  \else
  \section{Acknowledgements}
  \fi

  I.R. was funded by project DPI2017-92526-EXP from the Spanish Ministry of Economy, Industry and Competitiveness.
\ifsage
\end{funding}
\fi
\bibliographystyle{SageV}
\bibliography{bib}

\end{document}
